\noindent\minipage{\linewidth}}
\newcommand{\degr}{\ensuremath{^\circ}C}				   					
\newcommand{\pmat}[1]{                                   
\begin{pmatrix}
#1
\end{pmatrix}
}	
\newcommand{\abs}[1]{\left\lvert #1 \right\rvert}						
\newcommand{\norm}[1]{\left\lVert #1 \right\rVert}					
\newcommand{\diag}[1]{\text{diag}\{\begin{matrix}#1_1,& #1_2,&\cdots, &#1_n\end{matrix}\}} 
\newcommand{\R}{\mathbb{R}}												 					
\newcommand{\one}{\mathds{1}}																
\newcommand{\zero}{\mathbf{0}}															
\newcommand{\COP}{\text{COP}}											 					
\newcommand{\ttsup}{\text{sup}}											 					
\newcommand{\ttscal}{\text{scalar}}											 		
\newcommand{\ttout}{\text{out}}											 					
\newcommand{\ttin}{\text{in}}											 					
\newcommand{\ttret}{\text{ret}}											 				
\newcommand{\ttrem}{\text{rem}}											 				
\newcommand{\Tsafe}{T_\text{safe}}											 		
\DeclareDocumentCommand{\Tout}{O{PlainLetter} O{NoIndex}}{
		\def\LetterToPrint{T}%
		\def\SubscriptToPrint{\ttout}%
		\def\SuperscriptToPrint{test}%
		\ifthenelse{\equal{#2}{NoIndex}}%
				{\def\myCheckSuperscriptToPrint{0}}%
				{\ifthenelse{\equal{#2}{scal}}%
						{\def\SubscriptToPrint{\ttout,\ttscal}}%
						{\def\myCheckSuperscriptToPrint{1}\def\SuperscriptToPrint{#2}}%
				}%
		\IfEqCase{#1}{%
				{PlainLetter}{}%
				{bar}{\def\LetterToPrint{\bar{T}}}%
				{tilde}{\def\LetterToPrint{\tilde{T}}}%
				{hat}{\def\LetterToPrint{\hat{T}}}%
				{dot}{\def\LetterToPrint{\dot{T}}}%
				{scal}{\def\SubscriptToPrint{\ttout,\ttscal}}%
		}[\def\myCheckSuperscriptToPrint{1}\def\SuperscriptToPrint{#1}]%
		\ifthenelse{\equal{\myCheckSuperscriptToPrint}{1}}%
				{\LetterToPrint_{\SubscriptToPrint}^{\SuperscriptToPrint}}%
				{\LetterToPrint_{\SubscriptToPrint}}%
}%
\DeclareDocumentCommand{\Tin}{O{PlainLetter} O{NoIndex}}{
		\def\LetterToPrint{T}%
		\ifthenelse{\equal{#2}{NoIndex}}%
				{\def\SubscriptToPrint{\ttin}}%
				{\ifthenelse{\equal{#2}{scal}}%
									{\def\SubscriptToPrint{\ttin,\ttscal}}%
									{\def\SubscriptToPrint{\ttin,#2}}%
				}%
		\IfEqCase{#1}{%
				{PlainLetter}{}%
				{bar}{\def\LetterToPrint{\bar{T}}}%
				{tilde}{\def\LetterToPrint{\tilde{T}}}%
				{hat}{\def\LetterToPrint{\hat{T}}}%
				{dot}{\def\LetterToPrint{\dot{T}}}%
				{scal}{\def\SubscriptToPrint{\ttin,\ttscal}}%
		}[\def\SubscriptToPrint{\ttin,#1}]%
		\LetterToPrint_{\SubscriptToPrint}%
}%
\DeclareDocumentCommand{\Tsup}{O{PlainLetter} O{NoIndex}}{
		\def\LetterToPrint{T}%
		\def\OneToPrint{1}%
		\ifthenelse{\equal{#2}{NoIndex}}%
				{\def\SubscriptToPrint{\ttsup}}%
				{\ifthenelse{\equal{#2}{scal}}%
									{\def\OneToPrint{0}\def\SubscriptToPrint{\ttsup}}%
									{\def\SubscriptToPrint{\ttsup,#2}}%
				}%
		\IfEqCase{#1}{%
				{PlainLetter}{}%
				{bar}{\def\LetterToPrint{\bar{T}}}%
				{tilde}{\def\LetterToPrint{\tilde{T}}}%
				{hat}{\def\LetterToPrint{\hat{T}}}%
				{dot}{\def\LetterToPrint{\dot{T}}}%
				{scal}{\def\OneToPrint{0}}%
		}[\def\SubscriptToPrint{\ttsup,#1}]%
		\ifthenelse{\equal{\OneToPrint}{1}}%
				{\one\LetterToPrint_{\SubscriptToPrint}}%
				{\LetterToPrint_{\SubscriptToPrint}}%
}%
\newcommand{\Q}[2][0]{																		
		\IfEqCase{#2}{%
				{out}{\def\SubscriptToPrint{\ttout}}%
				{in}{\def\SubscriptToPrint{\ttin}}%
				{ret}{\def\SubscriptToPrint{\ttret}}%
				{rem}{\def\SubscriptToPrint{\ttrem}}%
				{sup}{\def\SubscriptToPrint{\ttsup}}%
		}[\def\SubscriptToPrint{#2}]%
		\ifthenelse{\equal{#1}{0}}%
				{Q_\SubscriptToPrint}%
				{Q_{\SubscriptToPrint}^#1}%
}%
\DeclareDocumentCommand{\D}{O{PlainLetter} O{NoIndex}}{
		\def\SubscriptToPrint{0}
		\def\maxvar{0}
		\def\LetterToPrint{D}%
		\IfEqCase{#1}{%
				{PlainLetter}{}%
				{bar}{\def\LetterToPrint{\bar{D}}}%
				{tilde}{\def\LetterToPrint{\tilde{D}}}%
				{hat}{\def\LetterToPrint{\hat{D}}}%
				{dot}{\def\LetterToPrint{\dot{D}}}%
				{scal}{}%
				{max}{\def\maxvar{1}}%
		}[\def\SubscriptToPrint{1}]%
		\ifthenelse{\equal{\SubscriptToPrint}{1}}%
				{\LetterToPrint_{#1}}%
				{\ifthenelse{\equal{#2}{NoIndex}}%
						{\ifthenelse{\equal{\maxvar}{1}}%
								{\LetterToPrint_\text{max}}%
								{\LetterToPrint}%
						}%
						{\ifthenelse{\equal{\maxvar}{1}}%
								{\LetterToPrint_{\text{max}}^#2}%
								{\LetterToPrint_{#2}}%
						}%
				}%
}%
\newcommand{\Dmax}[1][NoIndex]{\D[max][#1]}
\newcommand{\Tsupscal}{\Tsup[scal]}	 								
\def\be{\begin{equation}}
\def\ee{\end{equation}}
\def\ba{\begin{array}}
\def\ea{\end{array}}
\theoremstyle{plain}
\newtheorem{thm}{Theorem}
\newtheorem{lem}{Lemma}
\theoremstyle{definition}
\newtheorem{example}{Example}
\newtheorem{assum}{Assumption}
\newtheorem{property}{Property}
\newtheorem{problem}{Problem}
\theoremstyle{remark}
\newtheorem{rem}{Remark}
\title{Optimized Thermal-Aware Job Scheduling and Control of Data Centers}
\author{Tobias Van Damme, Claudio De Persis and Pietro Tesi%
\thanks{All authors are with the Department of ENTEG, Faculty of Mathematics and Natural Sciences,
				University of Groningen, 9747 AG Groningen, The Netherlands.
				\texttt{ \{t.van.damme, c.de.persis, p.tesi\}@rug.nl}}%
\thanks{The authors declare no competing financial interest}%
}
\begin{document}

\maketitle
\thispagestyle{empty}
\pagestyle{empty}

\begin{abstract}
Analyzing data centers with thermal-aware optimization techniques is a viable approach to reduce energy consumption of data centers. By taking into account thermal consequences of job placements among the servers of a data center, it is possible to reduce the amount of cooling necessary to keep the servers below a given safe temperature threshold. We set up an optimization problem to analyze and characterize the optimal setpoints for the workload distribution and the supply temperature of the cooling equipment. Furthermore under mild assumptions we design and analyze controllers that drive the data center to the optimal state without knowledge of the current total workload to be handled by the data center. The response of our controller is validated by simulations and convergence to the optimal setpoints is achieved under varying workload conditions. 
\end{abstract}


\section{Introduction}
Data centers are big energy consumers, in 2013 data centers consumed 350 billion kWh of energy, 1.73\% of the global electricity consumption \cite{DCD14JournalJanuaryFebruaryPage1617,ElectricityConsumptionWorld}. With the world being digitized more and more each year, this number is likely to increase as well. Therefore in the last decade computer scientists and control engineers have made efforts to reduce the energy consumption of data centers by devising methods to increase the operational efficiency of these computer halls \cite{Hameed14Survey}. 

Although much progress has been made, there are still several challenges ensuring efficient operation of the cooling equipment. Due to bad design or unawareness for the thermal properties of the data center, local thermal hotspots can arise. This causes the cooling equipment to overreact to ensure that the temperature of the equipment stays below the safe thermal threshold. These peaks cause the cooling equipment to consume more energy then would be necessary if these hotspots were avoided. Therefore having a good understanding of the thermodynamics involved is vital to increasing the cooling efficiency of the data center.  

To tackle these challenges researchers have studied strategies which uses the knowledge of the thermal properties of the data center to make more intelligent choices how to schedule incoming jobs \cite{Moore05Making, Tang08Energy}. With heuristic methods they showed improvements of up to 30\% less energy consumption with respect to non thermal-aware job schedulers. Other approaches include considering a heterogeneous data center \cite{Sun14Energy} and using these asymmetric properties to analyze trade-offs between performance- and energy-aware algorithms, or distinguishing between different type of jobs when scheduling the load \cite{Jiang14Thermal}. Server consolidation is a natural extension where on top of thermal scheduling, racks are switched on and off to save power. These algorithms usually contain two steps, first to calculate the necessary number of racks and secondly the correct workload scheduling \cite{Moore05Making,Pahlavan14Power,Pakbaznia10Temperature,Li12Joint,Tang06Thermal}.

On the other hand, studies have also been done in a more theoretical direction. Cast as a control problem \cite{Vasic10Thermal} has proposed a control algorithm that tries to maintain the temperature of the equipment around a target value. In \cite{Yin14Adaptive} a two-step algorithm is proposed that first minimizes the energy consumption by estimating the required amount of servers to handle the expected workload. In the second step the algorithm maximizes the response time given a number of servers at its disposal. In an attempt to address scalability a distributed approach has been studied in \cite{Doyle13Distributed}. In this work units, which range from servers to complete data centers, communicate directly and try to achieve a uniform temperature profile. Another distributed control approach in a hybrid systems setting is proposed in \cite{Albea14Hybrid}. The hybrid controller tries to evenly divide the total load among the agents in the network in a distributed fashion.

While all this work has strong points on its own, to the authors best knowledge a thorough analysis and characterization of an energy minimal solution combined with a straightforward control strategy which handles both cooling and job scheduling simultaneously has not been done before. The objective of this work is to supply an easily extendable framework that allows for a characterization of an energy-minimal operating point and then supply straightforward methods for operating the data center such that this operating point is achieved for all load conditions. In addition it should be extendable to include more complex concepts, like switching on and off servers or including quality-of-service constraints. 

The contribution of this work is two-fold. First from existing thermodynamical principles we set up a thermodynamical model from which we derive an optimization problem that combines energy minimization with the thermodynamics. In addition to only including temperature constraints \cite{Li12Joint} we extend the model to also incorporate workload constraints, which allows us to better characterize energy minimal solutions. This design allows for natural extendability to more complicated scheduling policies like switching servers on and off.

Second we develop a novel control strategy for handling the control of the cooling equipment and the workload scheduling simultaneously. Both these control goals have been studied before \cite{Vasic10Thermal,Parolini12Cyber}. However in \cite{Vasic10Thermal} the two control goals were handled separately; In \cite{Parolini12Cyber} a combined algorithm was suggested but due to complexity could lead to non-optimal solution. In contrast our model shows an easy method for handling coordinated cooling and job scheduling control which is guaranteed to converge to the energy minimal solution. Our method is inspired by results from \cite{Burger15Dynamic} where regulation to optimal steady solutions in the presence of disturbances was considered. Therefore our strategy also allows for varying and unknown workload changes while guaranteeing convergence to the energy-minimal operating point.

The remainder of this paper is organized as follows. In \autoref{sec:SystMod} the basic thermodynamics are formulated. Then an optimization problem is formulated in \autoref{sec:ProbForm} and the equivalence to a reduced form is proven. Following up the optimal solution is analytically analyzed and characterized for different load conditions in \autoref{sec:CharacOptSol}. Using this analytical solution a controller is proposed in \autoref{sec:Controller} that can handle unknown load conditions. Finally in \autoref{sec:CaseStudy} a case study is considered to show the performance of the controllers.

An abbreviated version of this paper is submitted to IFAC 2017 for presentation.

\textbf{Notation:} We denote by $\mathbb{R}$ and $\mathbb{R}_{\geq 0}$ the set of real numbers and non-negative real numbers respectively. Vectors and matrices are denoted by $x \in \R^n$ and $A \in \R^{n\times m}$ respectively, the transpose is denoted by $x^T$ and the inverse of a matrix is denoted by $A^{-1}$. If the entries of $x$ are functions of time then the element-wise time derivative is denoted by $\dot{x}(t) := \frac{d}{dt}x$. By $x_i$ we denote the $i$-th element of $x$ and by $a_{ij}$ we denote the $ij$-th element of $A$. If a variable already has another subscript then we switch to superscripts to denote individual elements, i.e. $\Tout[i]$ and $C_3^{ij}$. We write the diagonal matrix constructed from the elements of vector $x$ as $\diag{x}$. The identity matrix of dimension $n$ is denoted by $I_n$, the vector of all ones by $\one \in \R^n$ and the vector of all zeros by $\zero \in \R^n$. Furthermore the vector comparison $x~\preccurlyeq~y$ is defined as the element-wise comparison $x_i~\leq~y_i$ for all elements in $x$ and $y$. Finally a data center consists of $n$ racks.


\section{System model}
\label{sec:SystMod}
Real life data centers are organized in aisles with many racks each containing a multitude of servers. The cooling of data centers is usually done by air conditioning, therefore the racks are set up in a hot- and cold-aisle configuration. Cold air supplied by the computer room air conditioning (CRAC) units is blown into the cold aisles. The air goes through the racks where it absorbs the heat produced by the servers. The air exits the servers in the hot aisle and is recirculated back to the CRAC units where it is cooled down to the desired supply temperature. 
A scheduler divides incoming tasks among the racks according to some decision policy. The energy consumption of a rack depends on the amount of tasks it is given. By thermodynamical principles almost all of this energy consumption is dissipated as heat in the rack. Ideally all of the exhaust air of the racks is returned to the CRAC, however due to the complex nature of air flows within the data center some of the hot air is recirculated back into the cold aisles. This causes the temperature of the air at the inlet of the racks to rise, creating inefficiencies in the cooling of the data center.

\subsection{Workload}
Requests arriving at the data center are collected by a scheduler which then decides according to some policy how to divide this work among the available racks. We assume that each job has an accompanying tag which denotes the time and the number of computing units (CPU) it requires for execution. Let $J$ denote the integer number of jobs that the scheduler has to schedule in the data center at time $t$. Then $\mathcal{J}(t) = \{1,\cdots,J\}$ denotes the set of jobs to be scheduled at time $t$. Furthermore let $\lambda_j$ be the number of CPU's that job $j$ requires at time $t$. Then the total number of CPU's, $D^*$, the scheduler has to divide over the racks at time $t$ is given by
\begin{equation}
D^*(t) = \sum_{j=1}^{\mathcal{J}(t)}{\lambda_j}.
\label{eq:totWork}
\end{equation}
We denote by $D_i(t)$ the number of CPU's the schedulers assigns to rack $i$ at time $t$. This variable is collected in the vector
\[
D(t) := \pmat{D_1(t) & D_2(t) & \cdots & D_n(t)}^T.
\]
 
\subsection{Power consumption of racks}
\label{subsec:PowConsRacks}
The most common way to model the power consumption of a single rack is using a linear model \cite{Tang06SensorBased,Heath06Mercury,Ranganathan06Ensemble}. In this way the power consumption, $P_i(t)$, of a rack is modeled to consist of a load-independent part, e.g. the equipment consumes a constant amount of power, and a load-dependent part, e.g. the number of CPU's that are actively processing jobs
\begin{equation}
P_i(t) = v_i + w_iD_i(t),
\label{eq:P_i}
\end{equation}
where $v_i$~[Watts] is the power consumption for the unit being powered on, $w_i$~[Watts~CPU$^{-1}$] is the power consumption per CPU in use. The variables are collected in the vectors
\begin{align*}
P(t) &:= \pmat{P_1(t) & P_2(t) & \cdots & P_n(t)}^T,\\
V &:= \pmat{v_1 & v_2 & \cdots & v_n}^T,
\end{align*}
and
\[
W := \diag{w},
\]
so that 
\begin{equation}
P(t) = V+WD(t).
\label{eq:powerVector}
\end{equation}

\subsection{Thermodynamical model}
\label{subsec:ThermModel}
\begin{figure}
	\centering
		\includegraphics{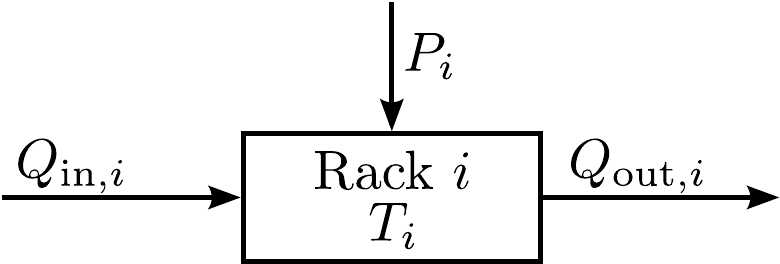}
	\caption{Heat model of an individual rack. $\Q[i]{in}$ is the heat entering the rack, $\Q[i]{out}$ is the heat exiting the rack and $P_i$ is the power consumption of the rack.}
	\label{fig:heatmodel}
\end{figure}

Following similar arguments as in \cite{Vasic10Thermal} and \cite{Tang06SensorBased}, a thermodynamical model for each individual rack is constructed. In \autoref{fig:heatmodel} a graphical representation of the heat flows involved is given. The change of temperature of a rack is given by the difference in heat entering and exiting the rack,	
\begin{equation}
m_ic_p\frac{d}{dt}\Tout[i](t) = \Q[i]{in}(t) - \Q[i]{out}(t) + P_i(t).
\label{eq:heatflow}
\end{equation}
Here $\Tout[i]$~[\degr] is the temperature of the exhaust air at rack $i$, $c_p$~[J~\degr$^{-1}$~kg$^{-1}$] is the specific heat capacity of air, $m_i$~[kg] is the mass of the air inside the rack, $\Q[i]{in}$~[Watts] and $\Q[i]{out}$~[Watts] are the heat entering and exiting the rack respectively. The heat that enters a rack consists of two parts due to the complex air flows in the data center, i.e. the recirculated air originating from the other racks and the cooled air supplied by the CRAC
\begin{equation}
\Q[i]{in}(t) = \sum_{j=1}^n{\gamma_{ji}\Q[i]{out}(t)} + \Q[i]{sup}(t).
\label{eq:heatin}
\end{equation}
Here $\Q[i]{sup}$ [Watts] is the heat supplied by the CRAC to rack $i$, and $\gamma_{ji}$ is the percentage of the flow which recirculates from rack $j$ to rack $i$. 
The relation between heat and temperature is given by
\begin{equation}
Q(t) = \rho c_p fT(t),
\label{eq:heattoT}
\end{equation}
where $\rho$ [kg m$^{-3}$] is the density of the air and $f$ [m$^3$ s$^{-1}$] is the flow rate of the given flow. Combining \eqref{eq:heatin} and \eqref{eq:heattoT} with \eqref{eq:heatflow} yields
\begin{align}
\frac{d}{dt}&\Tout[i](t) = \frac{\rho}{m_i}\left(\sum_{j=1}^n{\gamma_{ji} f_j\Tout[j](t)}  - f_i\Tout[i](t)\right)\nonumber\\
&\quad + \frac{\rho}{m_i}\left(f_i-\sum_{j=1}^n{\gamma_{ji}f_j}\right)\Tsupscal(t) + \frac{1}{m_ic_p}P_i(t),
\end{align}
where $\Tsupscal$ [\degr] is the temperature of the air supplied by the CRAC and $f_i$ is the velocity of the air flow through rack $i$. Rewriting the above relation in matrix form, i.e. combining the temperature changes of all racks in one equation, results in
\begin{equation}
\frac{d}{dt}\Tout(t) = A(\Tout(t)-\Tsup(t)) + M^{-1}P(t).
\label{eq:ThermodynamicalModel}
\end{equation}
Here
\[
\Tout(t) := \pmat{\Tout[1](t) & \Tout[2](t) & \cdots & \Tout[n](t)}^T,
\]
and
\begin{align*} 
A &:= \rho c_pM^{-1}(\Gamma^T - I_n)F, \\
F &:= \diag{f}, \\
M &:= \diag{c_pm},\\
\Gamma &:= [\gamma_{ij}]_{n \times n}.
\end{align*}

\subsection{Power consumption of CRAC}
The power consumption of the CRAC is dependent on the temperature of the air which is returned to CRAC and the supply temperature it has to provide. The air flow which is returned from rack $i$ to the CRAC is given by
\begin{equation}
f_{\ttsup,i}^\ttret = \left(1 - \sum_{j=1}^n{\gamma_{ij}}\right)f_i,
\label{eq:retAirFlow}
\end{equation}
and therefore the heat returned from all the racks to the CRAC is
\begin{equation}
\Q{ret}(t) = \rho c_p \sum_{i=1}^n{\left(1 - \sum_{j=1}^n{\gamma_{ij}}\right)f_i\Tout[i](t)}.
\label{eq:retHeat}
\end{equation}
The heat the CRAC sends back to the data center is given by $\Q{sup}(t) = \rho c_p f_\ttsup \Tsupscal(t)$. With this the heat the CRAC has to remove from the air, $\Q{rem}(t)$, is given by
\begin{align}
\Q{rem}(t) &= \Q{ret}(t) - \Q{sup}(t) \nonumber\\
&=\rho c_p \sum_{i=1}^n{\left[\left(1 - \sum_{j=1}^n{\gamma_{ij}}\right)f_i(\Tout[i](t) - \Tsupscal(t))\right]}\nonumber\\
&= -\one^TMA(\Tout(t)-\Tsup(t)).
\label{eq:heatRem}
\end{align}
To determine the amount of work the CRAC has to do to remove a certain amount of heat, Moore et al. \cite{Moore05Making} introduced the Coefficient of Performance, \COP{}$(\Tsupscal(t))$, to indicate the efficiency of the CRAC as a function of the target supply temperature. They found that CRAC units work more efficiently when the target supply temperature is higher. The \COP{} represents the ratio of heat removed to the amount of work necessary to remove that heat. For a water-chilled CRAC unit in the HP Utility Data Center they found that the \COP{} is a quadratic, increasing function. In a general sense the \COP{} can be any monotonically increasing function. The power consumption of the CRAC units can then be given by

\begin{equation}
P_{AC}(\Tout(t),\Tsupscal(t)) = \frac{\Q{rem}(t)}{\COP(\Tsupscal(t))}.
\label{eq:Pac}
\end{equation}
\begin{assum}
\label{assum:COP}
The \COP{}$(\Tsupscal(t))$, of the CRAC unit considered in this paper, is a monotonically increasing function in the range of operation for $\Tsupscal$.\hfill\QED
\end{assum}
\begin{example}
Let us consider a small example to illustrate the influence of a small difference in supply temperature on the power consumption of the CRAC. Consider the quadratic \COP{}$(\Tsupscal(t))$ found by \cite{Moore05Making}, and two cases where the returned air has to be cooled down by 5 \degr{}, in the first case from 25 \degr{} to 20 \degr{} and in the second case from 30\degr{} to 25\degr{}. Assume that the energy contained in 5 \degr{} temperature difference of air is 100 Watts. In the first case $\COP(20) = 3.19$ and in the second case $\COP(25) = 4.73$. By \eqref{eq:Pac}, the energy consumed by the CRAC to cool down the returned air to the required temperature is
\[
P_{AC,1} = \frac{100}{3.19} = 31.34\text{ W}, \quad P_{AC,2} = \frac{100}{4.73} = 21.14\text{ W}.
\]
Here it seen that if the temperature of the returned air increases by 5 \degr{} the power consumption of the CRAC unit decreases by 30\%. \hfill\QED
\end{example}
Having completed the model finally allows us to formulate the control problem we would like to solve. 


\section{Problem formulation}
\label{sec:ProbForm}
The objective of this paper is two-fold, first we want to find optimal setpoints for the temperature distribution, the supply temperature and workload distribution that minimize the power consumption of the data center. Secondly we want to design controllers which ensure convergence of the variables to the obtained setpoints. Hence the control problem is defined as follows:
\begin{problem}
For system \eqref{eq:ThermodynamicalModel} design controllers for the workload distribution $D(t)$ and supply temperature $\Tsupscal(t)$ such that, given an unmeasured total load $D^*(t)$, any solution of the closed-loop system is bounded and satisfies
\begin{align}
\lim_{t\rightarrow\infty}{(\Tout(t) - \Tout[bar])} &= 0,\label{goal:Tout}\\
\lim_{t\rightarrow\infty}{(\Tsupscal(t) - \Tsup[bar][scal])} &= 0,\label{goal:Tsup}\\
\lim_{t\rightarrow\infty}{(D(t) - \bar{D})} &= 0,\label{goal:Work}
\end{align}
where $\Tout[bar]$, $\Tsup[bar][scal]$ and $\bar{D}$ are the optimal setpoint values for the temperature distribution, supply temperature and the power consumption, i.e. workload distribution, respectively, which are defined in \autoref{subsec:OptProb}.\hfill\QED
\end{problem} 
From this point on we will implicitly assume the dependence of the variables on time and only denote it there where confusion might arise otherwise. 

\subsection{Optimization problem}
\label{subsec:OptProb}
We formulate an optimization problem to minimize the power consumption while taking into account the physical constraints of the equipment, i.e the servers only have finite computational capacity and the temperature of the servers cannot exceed a certain threshold. The power consumption of the data center can be written as a combination of 2 parts, the power consumption of the cooling equipment and the power consumption of the racks. Combining \eqref{eq:powerVector} and \eqref{eq:Pac} we can write the total power consumption as
\begin{equation}
\mathcal{C}(\Tout,\Tsupscal,D) = \frac{\Q{rem}}{\COP(\Tsupscal)} + \one^TP(D).
\label{eq:costFunc}
\end{equation}

A reasonable way \cite{Li12Joint,Yin14Adaptive} to formulate the optimization problem is
\begin{subequations}
\label{eq:optimizationDef}
\begin{align}
\min_{\Tout,\Tsupscal,D}\quad& \frac{\Q{rem}}{\COP(\Tsupscal)}+\one^TP(D)\label{eq:costFunctionOpt}\\
s.t. \quad &D^* = \one^TD \label{cstr:totWork}\\
&\zero \preccurlyeq D \preccurlyeq \Dmax \label{cstr:compCap}\\
&\zero = A(\Tout-\Tsup) + M^{-1}P(D)\label{cstr:steadState}\\
& \Tout \preccurlyeq \Tsafe\label{cstr:Tthresh}.
\end{align}
\end{subequations}

Equation \eqref{cstr:totWork} ensures that all the available work is divided among the racks, \eqref{cstr:compCap} encompasses the computational capacity of the rack, i.e. rack $i$ has $\Dmax[i]$ CPU's available at most. The system dynamics should be at steady state once the optimal point has been reached, see \eqref{cstr:steadState}, and finally \eqref{cstr:Tthresh} enforces that the temperature of the racks is below the given safe threshold, $\Tsafe \in \R^n$.

\subsection{Reduced optimization problem}
Due to the non-linear nature of how the \COP{} affects the power consumption it is not trivial to analyze this problem. However under some mild assumptions it is possible to reduce the optimization defined in \eqref{eq:optimizationDef} to a simpler problem. 
\begin{thm}
\label{thm:reducedOpt}
Let the data center consist of homogeneous racks, i.e. $v_i = v$, and $w_i = w$ for all $i$ and assume constraint \eqref{cstr:steadState} is satisfied. Then problem \eqref{eq:optimizationDef} is equivalent to
\begin{subequations}
\label{eq:optimizationReduced}
\begin{align}
\max_{\Tout}\quad& C_1^T\Tout\\
s.t. \quad & \zero\preccurlyeq C_3\Tout + C_4(D^*)\preccurlyeq \Dmax \label{cstr:compCapReduced}\\
& \Tout \preccurlyeq \Tsafe,
\end{align}
\end{subequations}
for suitable $C_1, C_3$ and $C_4$.\hfill\QED
\end{thm}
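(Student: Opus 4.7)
The plan is to reduce the original three-variable minimization to a single-variable maximization by exploiting the homogeneity assumption together with the steady-state constraint \eqref{cstr:steadState} to show that most of the cost function is actually constant, and then to use \eqref{cstr:steadState} again to eliminate $\Tsupscal$ and $D$ in favor of $\Tout$.

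First I would evaluate the cost function \eqref{eq:costFunc} on the feasible set. Under homogeneity, $V = v\one$ and $W = wI_n$, so $\one^{T}P(D) = nv + w\one^{T}D = nv + wD^{*}$ by \eqref{cstr:totWork}; this term is a constant once $D^{*}$ is fixed. For the cooling term, premultiply \eqref{cstr:steadState} by $-\one^{T}M$ to obtain $\zero = -\one^{T}MA(\Tout-\Tsup) - \one^{T}P(D)$. Comparing with \eqref{eq:heatRem} gives $\Q{rem} = \one^{T}P(D) = nv + wD^{*}$, which is again constant. Hence \eqref{eq:costFunctionOpt} equals $(nv+wD^{*})/\COP(\Tsupscal) + nv + wD^{*}$, and by \autoref{assum:COP} minimizing this cost is equivalent to maximizing $\Tsupscal$.

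Next I would express $\Tsupscal$ as an affine function of $\Tout$. Writing $\Tsup = \Tsupscal\,\one$ and premultiplying \eqref{cstr:steadState} by $\one^{T}M$ as above, I solve
\[
\Tsupscal = \frac{\one^{T}MA\,\Tout + nv + wD^{*}}{\one^{T}MA\,\one},
\]
provided $\one^{T}MA\,\one \neq 0$. Maximizing $\Tsupscal$ is therefore equivalent to maximizing $C_{1}^{T}\Tout$, with $C_{1}^{T}$ a scalar multiple (whose sign depends on the sign of $\one^{T}MA\,\one$) of $\one^{T}MA$; the additive constant is irrelevant for the argmax.

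Finally I would eliminate $D$ from the computational-capacity constraint \eqref{cstr:compCap}. From \eqref{cstr:steadState} with $W = wI_n$,
\[
D = -\frac{1}{w}MA\,\Tout + \frac{\Tsupscal}{w}MA\,\one - \frac{v}{w}\one,
\]
and substituting the affine formula for $\Tsupscal$ gives a representation $D = C_{3}\Tout + C_{4}(D^{*})$, where $C_{3}$ collects the $\Tout$-dependent terms and $C_{4}(D^{*})$ is an affine function of $D^{*}$. Then \eqref{cstr:compCap} becomes exactly \eqref{cstr:compCapReduced}, and since \eqref{cstr:totWork} is automatically satisfied by this $D$ (it was used to derive the formula for $\Tsupscal$), the reduced feasible set \eqref{eq:optimizationReduced} coincides with the projection of the original one onto $\Tout$. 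The main technical obstacle I anticipate is the nondegeneracy condition $\one^{T}MA\,\one \neq 0$, which I would check from the definition of $A$ in \autoref{subsec:ThermModel} via the physical requirement that not all of the hot-aisle air recirculates, i.e. some fraction returns to the CRAC; this should be inferable from the structure of $(\Gamma^{T}-I_{n})F$ together with $\one^{T}(I_n - \Gamma)F\one > 0$.
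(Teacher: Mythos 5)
Your proposal is correct and follows essentially the same route as the paper: you reproduce its Lemma on $\Q{rem}=\one^TP(D)$, the homogeneity argument that makes $\one^TP(D)=nv+wD^*$ constant, and the affine elimination of $\Tsupscal$ and $D$ in terms of $\Tout$ via \eqref{cstr:steadState} and \eqref{cstr:totWork} (the paper's \autoref{lemma:Tsup} and \autoref{lemma:D}, which use $\one^TW^{-1}M$ instead of your $\one^TM$, an immaterial difference once $W=wI_n$). Your added attention to the nondegeneracy and sign of $\one^TMA\one$ and to the fact that \eqref{cstr:totWork} is automatically satisfied by $D=C_3\Tout+C_4(D^*)$ is a welcome tightening of details the paper leaves implicit, but it does not change the argument.
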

Before we prove this theorem we need to introduce some notation and extra theory.
\begin{lem}
Equations \eqref{eq:heatRem} and \eqref{cstr:steadState} imply that the following relation holds
\[
\one^TP(D) = -\one^TMA(\Tout-\Tsup)=\Q{rem},
\] 
which reduces the cost function to 
\begin{equation}
\mathcal{C}(\Tout,\Tsupscal,D) = \left(1 + \frac{1}{\COP(\Tsupscal)}\right)\one^TP(D).
\label{eq:costFuncReduced}
\end{equation}
\begin{proof}
By multiplying \eqref{cstr:steadState} by $\one^T M$ and solving for $\one^TP(D)$ we obtain above result.
\end{proof}
\end{lem}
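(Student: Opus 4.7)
The plan is to follow exactly the hint in the proof sketch: reduce the steady-state constraint \eqref{cstr:steadState} to a scalar relation by left-multiplying by $\one^T M$, recognize the resulting expression as $\Q{rem}$ by \eqref{eq:heatRem}, and then substitute into the original cost function \eqref{eq:costFunc}.

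First I would start with the steady-state condition \eqref{cstr:steadState}, namely $\zero = A(\Tout - \Tsup) + M^{-1} P(D)$, and apply the row vector $\one^T M$ from the left to both sides. Since $M$ is invertible (being diagonal with positive entries $c_p m_i$), the product $\one^T M M^{-1}$ collapses to $\one^T$, yielding
\[
\zero \;=\; \one^T M A(\Tout - \Tsup) + \one^T P(D).
\]
Rearranging then gives $\one^T P(D) = -\one^T M A(\Tout - \Tsup)$. Comparing this with the scalar identity derived in \eqref{eq:heatRem}, where the heat removed by the CRAC was shown to equal $\Q{rem} = -\one^T M A(\Tout - \Tsup)$, establishes the desired three-way equality $\one^T P(D) = -\one^T M A(\Tout - \Tsup) = \Q{rem}$. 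This step essentially uses energy conservation at steady state: the total electrical power dissipated in the racks must equal the heat removed by the CRAC.

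Second, I would plug this identity into the cost function \eqref{eq:costFunc}, $\mathcal{C}(\Tout, \Tsupscal, D) = \Q{rem}/\COP(\Tsupscal) + \one^T P(D)$, replacing $\Q{rem}$ by $\one^T P(D)$. Factoring out $\one^T P(D)$ then directly produces the reduced form
\[
\mathcal{C}(\Tout, \Tsupscal, D) = \left(1 + \frac{1}{\COP(\Tsupscal)}\right) \one^T P(D),
\]
which is exactly \eqref{eq:costFuncReduced}.

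There is no substantive obstacle here: the argument is a one-line linear-algebra manipulation combined with the definition of $\Q{rem}$, and the only detail worth noting is that multiplying by $\one^T M$ (rather than merely $\one^T$) is the natural choice because it clears the $M^{-1}$ in front of $P(D)$ while producing precisely the $\one^T M A$ coefficient that appears in \eqref{eq:heatRem}.
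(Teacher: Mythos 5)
Your proposal is correct and follows essentially the same route as the paper: left-multiplying the steady-state constraint by $\one^T M$ to cancel $M^{-1}$, identifying the resulting scalar with $\Q{rem}$ via \eqref{eq:heatRem}, and substituting into \eqref{eq:costFunc}. The only difference is that you spell out the substitution step that the paper leaves implicit, which is a harmless elaboration.
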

\begin{rem}
\label{remark:totPower}
In many real-life data centers most of the equipment is identical, i.e. the power consumption characteristics of the computational equipment is identical, that is $v_i = v$ and $w_i = w$ for all $i$ in \eqref{eq:P_i}. In this case the data center is said to be composed of homogeneous racks or, more simply, the data center is homogeneous. In case of a homogeneous data center the power consumption is given by $P(D) = v\one + wD$ and the total computational power consumption is given by
\begin{equation}
\one^T P(D) = nv + w\one^TD = nv + wD^*.
\label{eq:PowConsHomogeneous}
\end{equation}

The computational power consumption no longer depends on the way the jobs are distributed but only depends on the total workload. This property simplifies the cost function defined \eqref{eq:costFuncReduced} considerably.\hfill\QED
\end{rem} 
\begin{lem}
\label{lemma:Tsup}
If \eqref{cstr:totWork} and \eqref{cstr:steadState} are satisfied, then there is a unique supply temperature which follows from the desired, chosen temperature distribution, namely
\begin{align}
\Tsupscal &= C_1^T\Tout + C_2(D^*),\label{eq:optimalTsup}\\[0.5ex]
C_1^T &\overset{\Delta}{=}: \frac{\one^TW^{-1}MA}{\one^TW^{-1}MA\one},\nonumber\\
C_2(D^*)&\overset{\Delta}{=}: \frac{D^* + \one^TW^{-1}V}{\one^TW^{-1}MA\one}.\nonumber
\end{align}
\begin{proof}
After multiplying \eqref{cstr:steadState} by $\one^TW^{-1}M$, combining with \eqref{cstr:totWork} and some basic matrix manipulations the result is obtained. 
\end{proof}
\end{lem}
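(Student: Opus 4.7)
The plan is to follow the hint literally: apply a clever left-multiplication to the steady-state constraint \eqref{cstr:steadState} so that the unknown workload vector $D$ collapses into the scalar total $D^\ast$, and then solve algebraically for the scalar supply temperature. The key observation underlying the choice of multiplier is that $M^{-1}$ appears in front of $P(D)$ in \eqref{cstr:steadState}, so premultiplying by $M$ cancels it; the extra factor $\one^T W^{-1}$ is then what turns $W D$ into $\one^T D$, which is exactly the quantity the workload constraint fixes.

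Concretely, I would left-multiply \eqref{cstr:steadState} by $\one^T W^{-1} M$ to obtain
\[
\zero = \one^T W^{-1} M A (\Tout - \Tsup) + \one^T W^{-1} P(D).
\]
Next I would substitute $P(D) = V + W D$ from \eqref{eq:powerVector}, giving $\one^T W^{-1} P(D) = \one^T W^{-1} V + \one^T D$, and then use \eqref{cstr:totWork} to replace $\one^T D$ by the scalar $D^\ast$. The remaining step is to recognize that the supply temperature vector is homogeneous, $\Tsup = \Tsupscal \, \one$, so that $\one^T W^{-1} M A \Tsup = (\one^T W^{-1} M A \one)\,\Tsupscal$. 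Rearranging then yields
\[
(\one^T W^{-1} M A \one)\, \Tsupscal \;=\; \one^T W^{-1} M A \, \Tout + \one^T W^{-1} V + D^\ast,
\]
which, after dividing through, is precisely the claimed expression with the advertised $C_1^T$ and $C_2(D^\ast)$.

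The only non-routine part, and therefore where I expect the real obstacle, is justifying that division in the last step, i.e.\ that the scalar $\one^T W^{-1} M A \one$ is nonzero, so that the resulting $\Tsupscal$ is both well-defined and unique. Using $A = \rho c_p M^{-1}(\Gamma^T - I_n) F$ from \autoref{subsec:ThermModel}, this denominator simplifies to $\rho c_p \, \one^T W^{-1} (\Gamma^T - I_n) F \one$, i.e.\ a weighted sum of the quantities $f_i\!\left(\sum_j \gamma_{ij} - 1\right)/w_i$. Physically, $1 - \sum_j \gamma_{ij}$ is the fraction of rack $i$'s exhaust flow actually returned to the CRAC (see \eqref{eq:retAirFlow}), which is strictly positive whenever the CRAC is in the loop; hence the denominator is strictly negative and in particular nonzero. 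I would state this as the standing assumption that the data center is not thermally disconnected from its cooling unit, and use it to conclude that $\Tsupscal$ is uniquely determined by $\Tout$ and $D^\ast$, which finishes the proof.
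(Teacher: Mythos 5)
Your proposal is correct and follows essentially the same route as the paper's one-line proof: left-multiply \eqref{cstr:steadState} by $\one^T W^{-1} M$, use $P(D)=V+WD$ and \eqref{cstr:totWork} to reduce $\one^T D$ to $D^*$, exploit $\Tsup=\Tsupscal\one$, and solve for $\Tsupscal$. Your additional verification that the denominator $\one^T W^{-1} M A \one$ is strictly negative (hence nonzero, via the returned-flow interpretation of \eqref{eq:retAirFlow}) is a point the paper leaves implicit, and it is exactly the right justification for uniqueness.
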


\begin{lem}
\label{lemma:D}
If \eqref{cstr:totWork} and \eqref{cstr:steadState} are satisfied, then there is a unique workload distribution which follows from the desired, chosen temperature distribution, i.e.
\begin{align}
D&=C_3\Tout + C_4(D^*)\label{eq:optimalWork}, \\[0.5ex]
C_3 &\overset{\Delta}{=}: -W^{-1}MA(I_n-\one C_1^T),\nonumber\\
C_4(D^*)&\overset{\Delta}{=}: W^{-1}MA\one C_2(D^*)-W^{-1}V.\nonumber
\end{align}
\begin{proof}
The proof of this lemma is along the same lines as the proof of \autoref{lemma:Tsup} and is therefore omitted.
\end{proof}
\end{lem}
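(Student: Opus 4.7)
The plan is to mimic the argument of Lemma \ref{lemma:Tsup}, but now isolating $D$ rather than $\Tsupscal$ from the steady-state relation. First I would rewrite the constraint \eqref{cstr:steadState} using $P(D)=V+WD$ from \eqref{eq:powerVector} and $\Tsup=\one\Tsupscal$, yielding
\[
\zero = A(\Tout - \one\Tsupscal) + M^{-1}(V+WD).
\]
Since $W=\diag{w}$ has strictly positive diagonal entries (each active CPU consumes a positive amount of power per unit of load, $w_i>0$), the matrix $W$ is invertible, so I can solve algebraically for $D$ and obtain
\[
D = -W^{-1}MA\,\Tout + W^{-1}MA\one\,\Tsupscal - W^{-1}V.
\]
This already shows that $D$ is uniquely determined once $\Tout$ and $\Tsupscal$ are fixed.

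Next I would eliminate $\Tsupscal$ by invoking Lemma \ref{lemma:Tsup}, which under the same hypotheses \eqref{cstr:totWork} and \eqref{cstr:steadState} gives the unique expression $\Tsupscal = C_1^T\Tout + C_2(D^*)$. Substituting and grouping the $\Tout$-terms produces
\[
D = -W^{-1}MA\,(I_n - \one C_1^T)\,\Tout + W^{-1}MA\one\,C_2(D^*) - W^{-1}V,
\]
which matches the claimed form with
\[
C_3 = -W^{-1}MA(I_n-\one C_1^T), \quad C_4(D^*) = W^{-1}MA\one\,C_2(D^*) - W^{-1}V.
\]

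Uniqueness is then immediate: $\Tsupscal$ is unique by Lemma \ref{lemma:Tsup}, and given $\Tsupscal$ and $\Tout$ the invertibility of $W$ fixes $D$ unambiguously. The argument involves no real obstacle beyond the bookkeeping of substitutions; the only point deserving care is the appeal to invertibility of $W$, which is guaranteed by the modeling choice that $w_i>0$ in \eqref{eq:P_i} and requires no additional assumption.
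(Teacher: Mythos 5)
Your proposal is correct and follows essentially the route the paper intends: you solve the steady-state relation \eqref{cstr:steadState} for $D$ via $W^{-1}M$ and then eliminate $\Tsupscal$ using \autoref{lemma:Tsup}, which is exactly the "same lines" as the paper's proof of that lemma (there the equation is instead contracted with $\one^TW^{-1}M$ and combined with \eqref{cstr:totWork}). Your remark on the invertibility of $W=\diag{w}$ with $w_i>0$ is the only hypothesis needed and is consistent with the model in \eqref{eq:P_i}.
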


\begin{rem}
The dimensions of the constants from above lemma's are $C_1~\in~\R^n$, $C_2 \in \R$, $C_3 \in \R^{n\times n}$ and $C_4 \in \R^n$. The following identities for the constants $C_1$, $C_3$ and $C_4$ are observed
\begin{equation}
C_1^T\one = 1, \quad \one^TC_3 = \zero, \quad C_3\one = \zero,\quad \one^TC_4 = D^*.
\label{eq:propConst}
\end{equation}
\hfill\QED
\end{rem}

\autoref{lemma:Tsup} and \autoref{lemma:D} show that at the steady state the supply temperature and workload distribution are uniquely defined by the total workload, $D^*$, and the temperature distribution, $\Tout$. With these properties in mind we are ready to prove \autoref{thm:reducedOpt}.
\begin{proof}[Proof of \autoref{thm:reducedOpt}]
As shown in \autoref{remark:totPower} for a data center with homogeneous racks the total computational power consumption depends only on $D^*$ and is independent of distribution $D$. Then from \eqref{eq:costFuncReduced} it is seen that the power consumption is only dependent on the supply temperature of the CRAC. Combining this with \autoref{assum:COP} and \autoref{lemma:Tsup} trivially concludes the proof. With \autoref{lemma:D}, \eqref{cstr:compCap} is written as \eqref{cstr:compCapReduced}.
\end{proof}


\section{Characterization of the optimal solution}
\label{sec:CharacOptSol}
In the previous section we have showed the possibility to reduce the optimization problem to a simpler form. In this section we show that using KKT optimality conditions it is possible to further characterize the optimal point. 
\subsection{KKT optimality conditions}
Because the optimization problem \eqref{eq:optimizationReduced} is convex and all inequality constraints are linear functions we have that Slater's condition holds. Therefore it follows that $\Tout[bar]$ is an optimal solution to \eqref{eq:optimizationReduced} if and only if there exists $\bar{\mu},\bar{\mu}_+,\bar{\mu}_- \in \R_{\geq0}^n$ such that the following set of relations is satisfied \cite{Boyd04Convex}:
\begin{subequations}
\label{eq:LagrangeConditions}
\begin{align}
-C_1 + \bar{\mu} + C_3^T(\bar{\mu}_+-\bar{\mu}_-) &= \zero,\label{eq:Lagrangian}\\
\zero\preccurlyeq C_3\Tout[bar] + C_4(D^*)&\preccurlyeq \Dmax,\\
\Tout[bar] &\preccurlyeq \Tsafe,\\
\bar{\mu}_+^T(C_3\Tout[bar] + C_4(D^*)-\Dmax) &= 0,\label{eq:compSlackMaxWork}\\
\bar{\mu}^T_-(-C_3\Tout[bar] - C_4(D^*)) &= 0,\label{eq:compSlackMinWork}\\
\bar{\mu}^T (\Tout[bar]-\Tsafe) &= 0,\label{eq:compSlackMaxT}\\
\bar{\mu},\bar{\mu}_+,\bar{\mu}_- &\succcurlyeq \zero.\label{eq:LagrangeMultipliers}
\end{align}
\end{subequations}
 
The Lagrangian corresponding to the optimal problem is given by:
\begin{align}
\mathcal{L}(\mu,\mu_+,\mu_-,\Tout) = &- C_1^T\Tout + \mu^T (\Tout-\Tsafe) \nonumber\\
&+ \mu^T_-(-C_3\Tout - C_4(D^*)) \label{eq:lagrangian}\\
&+ \mu_+^T(C_3\Tout + C_4(D^*)-\Dmax).\nonumber
\end{align}
\subsection{Lagrange multipliers}
By studying the KKT optimality conditions we can characterize the optimal solution in different cases. 
\begin{itemize}
	\item \emph{Inactive workload constraints:} Every rack is processing some work but not all the processors of each rack are utilized:
	\[0< (C_3\Tout[bar] + C_4(D^*))_i< \Dmax[i] \quad \forall i \in \{1,\cdots,n\}.\]
	\item \emph{Partially active workload constraints:} In $k$ racks all processors are processing jobs. The other $n-k$ racks are processing some work but still have processors available:
	\begin{align*} 
	(C_3\Tout[bar] + C_4(D^*))_i&= \Dmax[i] \quad \forall i \in \{1,\cdots,k\},\\ 
	0< (C_3\Tout[bar]+ C_4(D^*))_i&< \Dmax[i] \quad \forall i \in \{k+1,\cdots,n\}.
	\end{align*}
\end{itemize}
The characterization of these two cases is summarized in the following two theorems.
\begin{thm}
\label{thm:InactiveConstraints}
Assume the case that none of the workload constraints are active, i.e. 
\[0< (C_3\Tout[bar] + C_4(D^*))_i< \Dmax[i] \quad \forall i \in \{1,\cdots,n\}.\]
The solution to \eqref{eq:LagrangeConditions} and the optimal solution for the optimization problem \eqref{eq:optimizationReduced} is then given by
\be
\bar{\mu}_+ = \bar{\mu}_-=\zero, \quad\bar{\mu} = C_1 \succ\zero, \quad\Tout[bar] = \Tsafe.
\ee
\end{thm}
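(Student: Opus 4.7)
The plan is to work directly from the KKT system \eqref{eq:LagrangeConditions}, driving the conclusions in the order $\bar{\mu}_+ , \bar{\mu}_- \to \bar{\mu} \to \Tout[bar]$. First I would use complementary slackness on the workload box constraints. By hypothesis, $0 \prec C_3\Tout[bar] + C_4(D^*) \prec \Dmax$ strictly componentwise, so the slacks $C_3\Tout[bar] + C_4(D^*) - \Dmax$ and $-C_3\Tout[bar] - C_4(D^*)$ are both strictly negative componentwise. Combined with \eqref{eq:compSlackMaxWork}, \eqref{eq:compSlackMinWork} and the sign condition \eqref{eq:LagrangeMultipliers}, this forces $\bar{\mu}_+ = \bar{\mu}_- = \zero$.

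Substituting these into the stationarity condition \eqref{eq:Lagrangian} kills the $C_3^T(\bar{\mu}_+-\bar{\mu}_-)$ term and yields $\bar{\mu} = C_1$ immediately. The next step, and the only part that is not purely algebraic, is to verify $C_1 \succ \zero$. For this I would unfold the definition of $A$ from \autoref{subsec:ThermModel}, namely $MA = \rho c_p (\Gamma^T - I_n)F$, and compute the row vector
\[
\one^T W^{-1} M A \;=\; w^{-1}\rho c_p\,\bigl[\,-(1-\textstyle\sum_{j}\gamma_{1j})f_1,\ \ldots,\ -(1-\textstyle\sum_{j}\gamma_{nj})f_n\,\bigr],
\]
where I have used that the racks are homogeneous, so $W = wI_n$. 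The $i$-th entry of $C_1$ then reduces to
\[
(C_1)_i \;=\; \frac{(1-\sum_j \gamma_{ij})f_i}{\sum_k (1-\sum_j \gamma_{kj})f_k},
\]
which is strictly positive under the natural physical premise that each rack returns a nonzero fraction of its flow to the CRAC (i.e. $\sum_j \gamma_{ij} < 1$ and $f_i > 0$). This also re-confirms the identity $C_1^T\one = 1$ listed in \eqref{eq:propConst}.

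Finally, with $\bar{\mu} = C_1 \succ \zero$ established, I would close the argument using the remaining complementary slackness condition \eqref{eq:compSlackMaxT}: $\bar{\mu}^T(\Tout[bar] - \Tsafe) = 0$. Because $\Tout[bar] \preccurlyeq \Tsafe$ and every entry of $\bar{\mu}$ is strictly positive, each summand $\bar{\mu}_i(\Tout[bar,i] - \Tsafe[i])$ is nonpositive, and their sum can vanish only if each vanishes. Hence $\Tout[bar] = \Tsafe$, which together with $\bar{\mu}_+=\bar{\mu}_-=\zero$ and $\bar{\mu}=C_1$ gives the claim.

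I expect the main obstacle to be the positivity of $C_1$: the purely KKT-algebraic manipulations are immediate, but concluding $\bar{T}_{\ttout}=\Tsafe$ genuinely requires translating the physical structure of the recirculation matrix $\Gamma$ and the airflow $F$ into a strict sign statement on $C_1$. A clean statement of the standing assumption $\sum_j \gamma_{ij} < 1$, $f_i>0$ for all $i$ should be made explicit (or cited from the modelling section) to make the step rigorous; without it, one would only get $\bar{\mu}\succcurlyeq \zero$ and not the desired $\Tout[bar]=\Tsafe$.
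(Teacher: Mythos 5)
Your proof is correct and follows essentially the same route as the paper: complementary slackness on the strictly inactive workload constraints forces $\bar{\mu}_+=\bar{\mu}_-=\zero$, stationarity then gives $\bar{\mu}=C_1$, and complementary slackness on the temperature constraint yields $\Tout[bar]=\Tsafe$. The only difference is that you explicitly verify $C_1\succ\zero$ by computing $\one^TW^{-1}MA$ from the flow structure (using $\sum_j\gamma_{ij}<1$ and $f_i>0$), a positivity claim the paper simply asserts; that check is sound and consistent with the paper's standing physical assumptions.
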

\begin{proof}
Because all the inequality constraints regarding the workload are inactive we have that both $C_3\Tout[bar] + C_4(D^*)-\Dmax~\prec~\zero$, and $-C_3\Tout[bar]~-~C_4(D^*)~\prec~\zero$. Then from \eqref{eq:compSlackMaxWork} and \eqref{eq:compSlackMinWork} we have that $\bar{\mu}_+ = \bar{\mu}_-=\zero$. From \eqref{eq:Lagrangian} it follows that $\bar{\mu} = C_1 \succ\zero$ such that from \eqref{eq:compSlackMaxT} we conclude that $\Tout[bar] = \Tsafe$. 
\end{proof}

\begin{thm}
\label{thm:partiallyActive}
In the case that a part of the workload constraints are active, i.e. 
\begin{align*} 
(C_3\Tout[bar] + C_4(D^*))_i&= \Dmax[i] \quad \forall i \in \{1,\cdots,k\},\\ 
0< (C_3\Tout[bar]+ C_4(D^*))_i&< \Dmax[i] \quad \forall i \in \{k+1,\cdots,n\},
\end{align*}
the solution of \eqref{eq:LagrangeConditions} is as follows:
\begin{enumerate}[(i)]
\item
For the racks at the constraint boundary, $i \in \{1,\cdots,k\}$:
\begin{align}
\bar{\mu}_-^i &= 0, \quad \frac{C^i_1 + \sum_{j=1,j\neq i}^k\bar{\mu}_+^j\abs{C_3^{ji}}}{C_3^{ii}} \geq \bar{\mu}_+^i \geq 0,\\ 
\bar{\mu}^i&=C^i_1 + \sum_{j=1,j\neq i}^k\bar{\mu}_+^j\abs{C_3^{ji}}-\bar{\mu}_+^iC_3^{ii}\geq 0, \\
\Tout[bar][i] &=\frac{\Dmax[i]- C^i_4(D^*)}{C_3^{ii}} + \sum_{j=k+1}^n{\frac{\abs{C_3^{ij}}}{C^{ii}_3}\Tsafe^j}\nonumber\\
&\quad+ \sum_{j=1,j\neq i}^k{\frac{\abs{C_3^{ij}}}{C^{ii}_3}\Tout[bar][j]}\nonumber\\
&\leq \Tsafe^i.
\end{align}
\item
For the racks that are not at the constraint boundary, $i\in\{k+1,\cdots,n\}$:
\begin{align}
\bar{\mu}_-^i &= \bar{\mu}_+^i = 0,\\ 
\bar{\mu}^i&=C^i_1 + \sum_{j=1}^k\bar{\mu}_+^j\abs{C_3^{ji}}>0, \\
\Tout[bar][i] &= \Tsafe^i.
\end{align}
\end{enumerate}
\hfill\QED
\end{thm}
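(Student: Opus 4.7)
The plan is to apply the KKT system \eqref{eq:LagrangeConditions} directly: first identify which multipliers must vanish by complementary slackness, then read off the remaining multipliers from the stationarity condition \eqref{eq:Lagrangian} component by component, and finally pin down the temperatures in the two index ranges from the appropriate active constraints. The single structural fact I will use repeatedly is that the off-diagonal entries of $C_3$ are nonpositive and the diagonal entries $C_3^{ii}$ are positive; this is inherited from the physical origin of $C_3 = -W^{-1}MA(I_n - \one C_1^T)$, with $\Gamma$ substochastic and $W,M,F$ positive diagonal, together with the identities $C_3\one = \zero$, $\one^TC_3 = \zero$ from \eqref{eq:propConst}. This sign pattern is what turns the signed sums dropping out of \eqref{eq:Lagrangian} into the $|C_3^{ji}|$ sums appearing in the statement.

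For the inactive indices $i \in \{k+1,\dots,n\}$ (item (ii)), both workload inequalities are strict, so \eqref{eq:compSlackMaxWork} and \eqref{eq:compSlackMinWork} immediately give $\bar{\mu}_+^i = \bar{\mu}_-^i = 0$. For the active indices $i \in \{1,\dots,k\}$ (item (i)), only the upper constraint is tight while the lower constraint remains strict (since $\Dmax[i]>0$), so $\bar{\mu}_-^i = 0$ there as well. Hence $\bar{\mu}_- = \zero$ and $\bar{\mu}_+$ is supported on $\{1,\dots,k\}$. The $i$-th component of \eqref{eq:Lagrangian} then reads
\[
\bar{\mu}^i \;=\; C_1^i \;-\; \sum_{j=1}^{k} C_3^{ji}\,\bar{\mu}_+^j,
\]
and splitting the sum at $j=i$ and invoking the sign pattern of $C_3$ yields precisely the expressions quoted for $\bar{\mu}^i$ in (i) and (ii). The upper bound on $\bar{\mu}_+^i$ in (i) is then nothing but the requirement $\bar{\mu}^i \geq 0$ from \eqref{eq:LagrangeMultipliers}. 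In (ii), positivity of $C_1^i$ (inherited from \autoref{thm:InactiveConstraints}) together with nonnegativity of every $\bar{\mu}_+^j$ gives $\bar{\mu}^i > 0$, and \eqref{eq:compSlackMaxT} then forces $\Tout[bar][i] = \Tsafe^i$.

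It only remains to determine $\Tout[bar][i]$ for $i \in \{1,\dots,k\}$. At those indices the upper workload constraint is tight, i.e. $(C_3\Tout[bar])_i + C_4^i(D^*) = \Dmax[i]$. I solve this scalar equation for $C_3^{ii}\Tout[bar][i]$, substitute $\Tout[bar][j] = \Tsafe^j$ at the already-resolved indices $j>k$, and rewrite $-C_3^{ij}$ as $|C_3^{ij}|$ for each $j \neq i$; this produces the claimed implicit formula for $\Tout[bar][i]$ in terms of $\{\Tsafe^j\}_{j>k}$ and the still-coupled $\{\Tout[bar][j]\}_{j \leq k,\,j\neq i}$, while the inequality $\Tout[bar][i] \leq \Tsafe^i$ is just primal feasibility. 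The main obstacle is therefore not algebraic but structural: it is the verification of the sign pattern of $C_3$. Once that is granted, the rest is bookkeeping on the KKT multipliers, and no solvability argument is needed for the $k \times k$ implicit system because Slater's condition, which already guaranteed existence of the primal-dual pair, provides it for free.
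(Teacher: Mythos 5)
Your proposal is correct and follows essentially the same route as the paper's proof: complementary slackness to kill $\bar{\mu}_-$ and the inactive components of $\bar{\mu}_+$, the componentwise stationarity condition $\bar{\mu}^i = C_1^i - \sum_{j=1}^k C_3^{ji}\bar{\mu}_+^j$ combined with the sign pattern of $C_3$ (which the paper isolates as Property~\ref{lemma:proofC3}, proved in the appendix), and solving the tight constraint $(C_3\Tout[bar]+C_4(D^*))_i=\Dmax[i]$ for $\Tout[bar][i]$ after substituting $\Tout[bar][j]=\Tsafe^j$ for $j>k$. The only difference is presentational: you invoke the sign structure of $C_3$ as a granted physical fact rather than citing its separate proof, which is exactly how the paper's own proof of the theorem uses it.
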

Before we can prove \autoref{thm:partiallyActive} we need to know more about the structure of $C_3$. 
\begin{property}
\label{lemma:proofC3}
Consider $C_3$ as defined in \autoref{lemma:D}. The off-diagonal terms of this matrix are strictly negative and the diagonal terms are strictly positive.
\end{property}
\begin{proof}
The proof can be found in Appendix \ref{append:proofC3}.
\end{proof}
\begin{proof}[Proof of \autoref{thm:partiallyActive}]
Because part of the workload constraints are at the constraint boundary, the analysis following from the Lagrange multipliers is more involved. First we can say that 
\begin{align*}
\bar{\mu}_-^i&=0\quad\forall i,\\
\bar{\mu}_+^i&=0\quad\forall i\in\{k+1,\cdots,n\},\\
\bar{\mu}_+^i&\geq0\quad\forall i\in\{1,\cdots,k\}.
\end{align*}
Then from \eqref{eq:Lagrangian}
\begin{align}
\bar{\mu}^i&=C^i_1 - \sum_{j=1}^k\bar{\mu}_+^jC_3^{ji}.\label{eq:LagrangeMu}
\end{align}
From \autoref{lemma:proofC3} we have that the off-diagonal elements of $C_3$ are strictly negative. For racks $i \in \{k+1,\cdots,n\}$ we have that the $C_3^{ji}$ elements in \eqref{eq:LagrangeMu} will always be off-diagonal elements. Therefore rewriting \eqref{eq:LagrangeMu} gives
\begin{align}
\bar{\mu}^i=C^i_1 + \sum_{j=1}^k\bar{\mu}_+^j\abs{C_3^{ji}}>0\quad&\forall i \in \{k+1,\cdots,n\},\\
\Rightarrow \Tout[bar][i] = \Tsafe^i\quad&\forall i \in \{k+1,\cdots,n\}.
\end{align}
For racks $i \in \{1,\cdots,k\}$ \eqref{eq:LagrangeMu} is given by
\begin{align}
\bar{\mu}^i&=C^i_1 + \sum_{j=1,j\neq i}^k\bar{\mu}_+^j\abs{C_3^{ji}}-\bar{\mu}_+^iC_3^{ii}\geq 0, \\
\Rightarrow& \frac{C^i_1 + \sum_{j=1,j\neq i}^k\bar{\mu}_+^j\abs{C_3^{ji}}}{C_3^{ii}}\geq\bar{\mu}_+^i\quad\forall i \in \{1,\cdots,k\}.\label{eq:LagrangeMuPlus}
\end{align}
As the left hand side of \eqref{eq:LagrangeMuPlus} is strictly positive for all $i~\in~\{1,\cdots,k\}$, it is possible to find feasible $\bar{\mu}_+^i\geq0$ such that $\bar{\mu}^i \geq 0$ for all $i$. It can be shown that $\Tout[bar][i]$ for all $i~\in~\{1,\cdots k\}$ is given as
\begin{align}
\Tout[bar][i] = &\frac{\Dmax[i]- C^i_4(D^*)}{C_3^{ii}} + \sum_{j=k+1}^n{\frac{\abs{C_3^{ij}}}{C^{ii}_3}\Tsafe^j}\nonumber\\
&+ \sum_{j=1,j\neq i}^k{\frac{\abs{C_3^{ij}}}{C^{ii}_3}\Tout[bar][j]}\nonumber\\
&\leq \Tsafe^i.
\label{eq:ToutbarActiveConstraints}
\end{align}
\end{proof}
\begin{rem}
One cannot freely choose the $k$ racks for which $D_i = \Dmax[i]$. Whether or not a rack is processing its maximum capacity depends on the data center parameters, i.e. small amount of recirculated air at the input of the rack and low power consumption of the computational equipment. For these racks it holds that
\[
\Tout[bar][i] \leq \Tsafe^i \quad\forall i \in \{1,\cdots,k\}.
\]
\end{rem}


\section{Temperature based job scheduling control}
\label{sec:Controller}
As established in \autoref{sec:CharacOptSol} it is possible to calculate the optimal solution under the assumption that the total workload at time $t$, $D^*$ is known. However it might not always be possible to obtain this quantity. For example when jobs arrive in the data center in some cases it might be hard to assess how much resources these jobs need. Consider the case where a virtual machine is requested by a user. Usually a certain amount of resources are allocated to this virtual machine, however the user need not use all the available resources all the time. In those situation it is hard to obtain the real workload. In this section we design a controller that is still able to achieve the control goals defined in \eqref{goal:Tout}-\eqref{goal:Work} under the assumption that $\zero\prec D\prec\Dmax$. From \autoref{thm:InactiveConstraints} we see that in this case the optimal solution is always $\Tout[bar] = \Tsafe$, independent of the way the jobs are distributed. Since most data centers are designed to have overcapacity usually the computational bounds of the racks will not be reached and this assumption is valid in those setups.

\subsection{Controller design} 
We will now design the control inputs for the workload distribution, $D$, and the supply temperature of the CRAC unit, $\Tsupscal$ while the total workload $D^*$ is unknown. Furthermore the controllers only have access to the measurement of the output temperature of the air at the outlet of each rack, $\Tout$. In other words we design temperature feedback algorithms to dynamically adjust $D$ and $\Tsupscal$ such that control objectives \eqref{goal:Tout}-\eqref{goal:Work} are achieved. The proposed controllers for the supply temperature and the workload distribution are given by 
\begin{align} 
\Tsup[dot][scal] &= \one^TA^TZ(\Tout-\Tsafe),\label{eq:controllerTsup}\\ 
\D[dot] &= (\frac{\one\one^T}{n}-I_n)B^TZ(\Tout-\Tsafe)\label{eq:controllerWork}, 
\end{align} 
where $A$ is Hurwitz, see Appendix \ref{append:Ahurwitz} for the proof of this property, $Z$ is the symmetric positive definite matrix such that 
\begin{equation} 
A^TZ+ZA = -2I_n, \label{eq:lyapEquation} 
\end{equation} 
and $B$ is
\[
B = M^{-1}W,
\]
where $W$ is defined \autoref{subsec:PowConsRacks}, and $A$ and $M$ are defined in \autoref{subsec:ThermModel}. The controllers \eqref{eq:controllerTsup} and \eqref{eq:controllerWork} depend only on the output temperature and the system parameters and will continue to vary until the output temperature reaches the safe temperature, which is in line with the control objectives. Intuitively the workload controller will shift jobs between racks based on the temperature deviation until the data center has reached the optimal state. In the results below we discuss the convergence behavior of the controllers in a time frame where the total workload, $D^*$, is assumed to be constant.
\begin{thm}
\label{thm:controllerConvergence} 
Assume $D^*$ is constant and $\one^T D(0) = D^*$. Then the solution of system \eqref{eq:ThermodynamicalModel} with controllers \eqref{eq:controllerTsup} and \eqref{eq:controllerWork} is bounded and converges to the optimal solution of the optimal problem defined in \eqref{eq:optimizationDef} and therefore satisfies control objectives \eqref{goal:Tout}-\eqref{goal:Work}.
\end{thm}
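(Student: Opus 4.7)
The plan is to recast the closed-loop dynamics in error coordinates around the optimum and exhibit an explicit Lyapunov function whose decay, together with LaSalle's invariance principle, drives the error to zero.

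First I would verify that the workload controller \eqref{eq:controllerWork} preserves the total workload. Premultiplying by $\one^T$ gives $\one^T\dot D = \one^T\bigl(\tfrac{\one\one^T}{n}-I_n\bigr)B^TZ(\Tout-\Tsafe) = 0$, so the assumption $\one^T D(0)=D^*$ yields $\one^T D(t)=D^*$ for all $t\geq 0$. Since we are in the inactive-workload-constraints regime $\zero\prec D\prec\Dmax$, \autoref{thm:InactiveConstraints} gives $\Tout[bar]=\Tsafe$, and \autoref{lemma:Tsup}–\autoref{lemma:D} then fix $\Tsup[bar][scal]=C_1^T\Tsafe+C_2(D^*)$ and $\bar D=C_3\Tsafe+C_4(D^*)$, which together satisfy the steady-state constraint \eqref{cstr:steadState}. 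In particular $\one^T\bar D=D^*$, so the invariant $\one^T D=D^*$ is consistent with the target.

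Next I would introduce the errors $\tilde T:=\Tout-\Tsafe$, $\tilde T_\ttsup:=\Tsup-\Tsup[bar]=\tilde T_{\ttsup,\ttscal}\one$, and $\tilde D:=D-\bar D$. Subtracting the steady-state equation from \eqref{eq:ThermodynamicalModel} produces
\begin{equation*}
\dot{\tilde T}=A\tilde T-A\tilde T_\ttsup+B\tilde D,\qquad \dot{\tilde T}_{\ttsup,\ttscal}=\one^TA^TZ\tilde T,\qquad \dot{\tilde D}=\bigl(\tfrac{\one\one^T}{n}-I_n\bigr)B^TZ\tilde T.
\end{equation*}
As a Lyapunov candidate I propose
\begin{equation*}
V=\tfrac12\tilde T^TZ\tilde T+\tfrac12\tilde D^T\tilde D+\tfrac12\tilde T_{\ttsup,\ttscal}^{\,2},
\end{equation*}
which is positive definite in $(\tilde T,\tilde D,\tilde T_{\ttsup,\ttscal})$ since $Z\succ 0$. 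Differentiating along the closed loop and applying \eqref{eq:lyapEquation} reduces the quadratic-in-$\tilde T$ term to $-\|\tilde T\|^2$. The cross terms involving $\tilde T_\ttsup$ cancel because $\tilde T_\ttsup$ lives in $\mathrm{span}(\one)$, so $-\tilde T^T Z A\tilde T_\ttsup=-\tilde T_{\ttsup,\ttscal}(\tilde T^T ZA\one)$ exactly offsets $\tilde T_{\ttsup,\ttscal}\one^TA^TZ\tilde T$. The term $\tilde D^T\tfrac{\one\one^T}{n}B^TZ\tilde T$ vanishes by the invariant $\one^T\tilde D=0$, and the remaining $\pm\tilde T^TZB\tilde D$ terms are scalar transposes of each other and cancel. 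Thus $\dot V=-\|\tilde T\|^2\le 0$, giving boundedness of all trajectories.

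Finally I would invoke LaSalle's invariance principle: trajectories converge to the largest invariant set inside $\{\dot V=0\}=\{\tilde T=0\}$. On this set, $\dot{\tilde T}=0$ forces $A\tilde T_\ttsup=B\tilde D$, i.e. $\tilde T_{\ttsup,\ttscal}\,A\one=M^{-1}W\tilde D$, hence $\tilde D=\tilde T_{\ttsup,\ttscal}\,W^{-1}MA\one$. Multiplying by $\one^T$ and using $\one^T\tilde D=0$ gives $\tilde T_{\ttsup,\ttscal}\,\one^TW^{-1}MA\one=0$; since the scalar $\one^TW^{-1}MA\one$ is the normalizing factor appearing in the definition of $C_1$ in \autoref{lemma:Tsup}, it is nonzero, so $\tilde T_{\ttsup,\ttscal}=0$ and consequently $\tilde D=0$. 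This identifies the invariant set as the single equilibrium $(\Tsafe,\Tsup[bar][scal],\bar D)$, establishing \eqref{goal:Tout}–\eqref{goal:Work}. The main delicate point I anticipate is the cancellation structure in $\dot V$: it relies simultaneously on $\tilde T_\ttsup$ being a multiple of $\one$ (built into the controller since the CRAC supplies a single scalar temperature), on the conservation law $\one^T\tilde D=0$ (enforced by the projector $I_n-\tfrac{\one\one^T}{n}$), and on the Lyapunov equation \eqref{eq:lyapEquation}; these three structural choices in \eqref{eq:controllerTsup}–\eqref{eq:controllerWork} are precisely what makes the candidate work.
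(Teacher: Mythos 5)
Your proposal is correct and follows essentially the same route as the paper: error coordinates around the optimum, the composite quadratic function $\frac12\tilde T^TZ\tilde T+\frac12\norm{\tilde D}^2+\frac12\tilde T_{\ttsup,\ttscal}^2$ (which the paper merely splits into a Lyapunov function plus two storage functions $\Xi_1,\Xi_2$), the conservation law $\one^T\tilde D=0$, the Lyapunov equation \eqref{eq:lyapEquation} to get $\dot V=-\norm{\Tout[tilde]}^2$, and LaSalle. Your explicit characterization of the invariant set (using $A\one\tilde T_{\ttsup,\ttscal}=B\tilde D$, $\one^T\tilde D=0$ and $\one^TW^{-1}MA\one\neq 0$ to force $\tilde T_{\ttsup,\ttscal}=\tilde D=0$) is in fact a slightly sharper rendering of the step the paper handles by appealing to the uniqueness statements of \autoref{lemma:Tsup} and \autoref{lemma:D}.
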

\begin{proof}
For ease of notation we introduce incremental variables to denote deviations from optimal values
\begin{align*}
\Tout[tilde] &= \Tout-\Tout[bar], \\
\Tsup[tilde][scal] &= \Tsupscal - \Tsup[bar][scal], \\
\tilde{D} &= D - \D[bar]
\end{align*}
With these definitions system \eqref{eq:ThermodynamicalModel} can be rewritten as
\begin{equation}
\dot{\tilde{T}}_\ttout = A\Tout[tilde] - A\Tsup[tilde] + B\tilde{D},\label{eq:controlModel} 
\end{equation}
where $A$ and $B$ are as before
\begin{align*}
A &= \rho c_pM^{-1}(\Gamma^T - I_n)F,\\
B &= M^{-1}W.
\end{align*}
Define the incremental storage functions as
\begin{align}
\Xi_1(\Tsup[tilde][scal]) &= \frac{1}{2}\norm{\Tsup[tilde][scal]}^2,\\
\Xi_2(\D[tilde]) &= \frac{1}{2}\norm{\D[tilde]}^2.
\end{align}
The storage functions satisfy
\begin{align}
\dot{\Xi}_1(\Tsup[tilde][scal],\Tout[tilde]) &= \Tsup[tilde][scal]^T\Tsup[dot][scal]\nonumber\\
&= \Tsup[tilde][scal]^T\one^TA^TZ\Tout[tilde], \label{eq:storFuncXi1dot}
\end{align}
and
\begin{align}
\dot{\Xi}_2(\D[tilde],\Tout[tilde]) &= \tilde{D}^T\D[dot]\nonumber\\
&= \tilde{D}^T(\frac{\one\one^T}{n}-I_n)B^TZ\Tout[tilde]\\
&= \tilde{D}^T\frac{\one\one^T}{n}B^TZ\Tout[tilde]-\tilde{D}^TB^TZ\Tout[tilde].\label{eq:storFuncXi2dot}
\end{align}
From constraint \eqref{cstr:totWork} we have that $\one^T D(t) = D^*$ should be satisfied for all $t\geq0$. First we notice that $\one^T \D[dot] = 0$ at all times $t\geq 0$. Hence if $\one^TD(0) = D^*$ then $\one^TD(t) = D^*$ for all $t \geq 0$. With this we see that $\tilde{D}^T\one=(D-\D[bar])^T\one=D^*-D^*=0$ such that \eqref{eq:storFuncXi2dot} is reduced to
\begin{equation}
\dot{\Xi}_2(\D[tilde],\Tout[tilde]) = -\tilde{D}^TB^TZ\Tout[tilde].
\label{eq:storFuncXi2dotReduced}
\end{equation}
Now consider the following Lyapunov function $V(\Tout[tilde]) = \frac{1}{2}\Tout[tilde]^TZ\Tout[tilde]$, where $Z$ is defined in \eqref{eq:lyapEquation}. Then $V(\Tout[tilde])$ satisfies
\begin{equation}
\dot{V}(\Tout[tilde]) = -\norm{\Tout[tilde]}^2 - \Tsup[tilde][scal]^T\one^TA^TZ\Tout[tilde] + \tilde{D}^TB^TZ\Tout[tilde].
\label{eq:lyapFunctionDot}
\end{equation}
Hence,
\begin{equation}
\dot{V}(\Tout[tilde]) + \dot{\Xi}_1(\Tsup[tilde][scal],\Tout[tilde]) + \dot{\Xi}_2(\D[tilde],\Tout[tilde]) = -\norm{\Tout[tilde]}^2 \leq 0.
\label{eq:totLyapFunction}
\end{equation}
Using LaSalle's invariance principle this result implies that every solution to the closed loop system initialized as $\one^TD(0) = D^*$ is bounded and shows convergence to the largest invariant set where $\Tout[tilde] = 0$. Because the temperature distribution converges to the optimal distribution and the total workload is given by $D^*$ we see from \autoref{lemma:Tsup} and \autoref{lemma:D} that the supply temperature and workload distribution will converge to a unique value given by \eqref{eq:optimalTsup} and \eqref{eq:optimalWork}. Because these values are unique these values are also optimal.   
Therefore we conclude that system \eqref{eq:controlModel} with controllers \eqref{eq:controllerTsup} and \eqref{eq:controllerWork} satisfies control objectives \eqref{goal:Tout}-\eqref{goal:Work}, and the state and the inputs of the system converge to the optimal solution.
\end{proof}

The proposed controller for the workload rebalances the workload currently present in the data center. The initial scheduling is assumed to be taken care of by an external entity over which we have no control. This approach is most applicable in cases where the scheduling is done in a non-controllable way, e.g. when the scheduling is hard-coded and incoming jobs are scheduled by means of chassis numbers. In these situations the only option available is to move jobs around to drive the data center to the the optimal state.  

\section{Case study}
\label{sec:CaseStudy}
To evaluate the performance of the proposed controller, we use Matlab to simulate the closed loop system with a synthetic workload trace. For both the data center parameters and the workload trace we use the data presented in \cite{Vasic10Thermal}. The data center parameters were obtained from measurements by Vasic et al. at the IBM Zurich Research Laboratory. This data is to our best knowledge the most extensive characterization of the heat recirculation parameters of a data center. 

\subsection{Data center parameters}
The simulated data center consists of 30 homogeneous server racks, i.e. the power consumption characteristics, the safe temperature threshold and physical parameters are identical for all 30 racks. The rack model is a Dell PowerEdge 1855, with 10 dual-processor blade servers, i.e. a total of 20 CPU units. The power consumption of the racks is modeled by $P_i(t)~=~1728~+~145.5D_i(t)$ \cite{Tang06Thermal}. The safe threshold temperature is set at 30\degr{}. 
\begin{figure}
	\centering
		\includegraphics[width=1.00\linewidth]{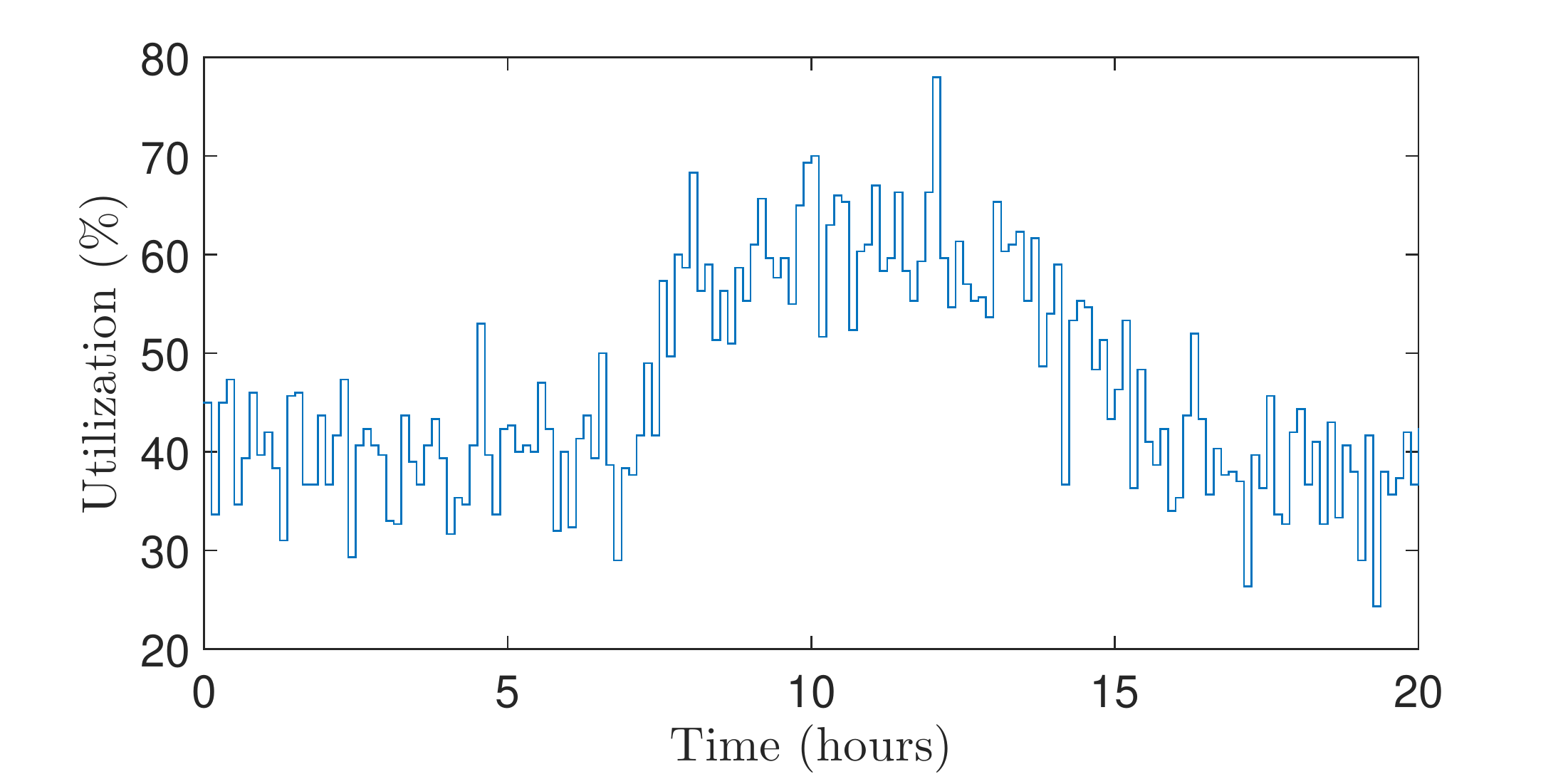}
	\caption{Synthetic workload trace supplied to data center. The workload varies $\pm 10\%$ about two nominal values, representing nighttime and daytime operation levels. The total workload changes every 7.5 minutes during which the workload is assumed to be constant.}
	\label{fig:totWorkloadLargeFont}
\end{figure}
\begin{figure}
	\centering
		\includegraphics[width=1.00\linewidth]{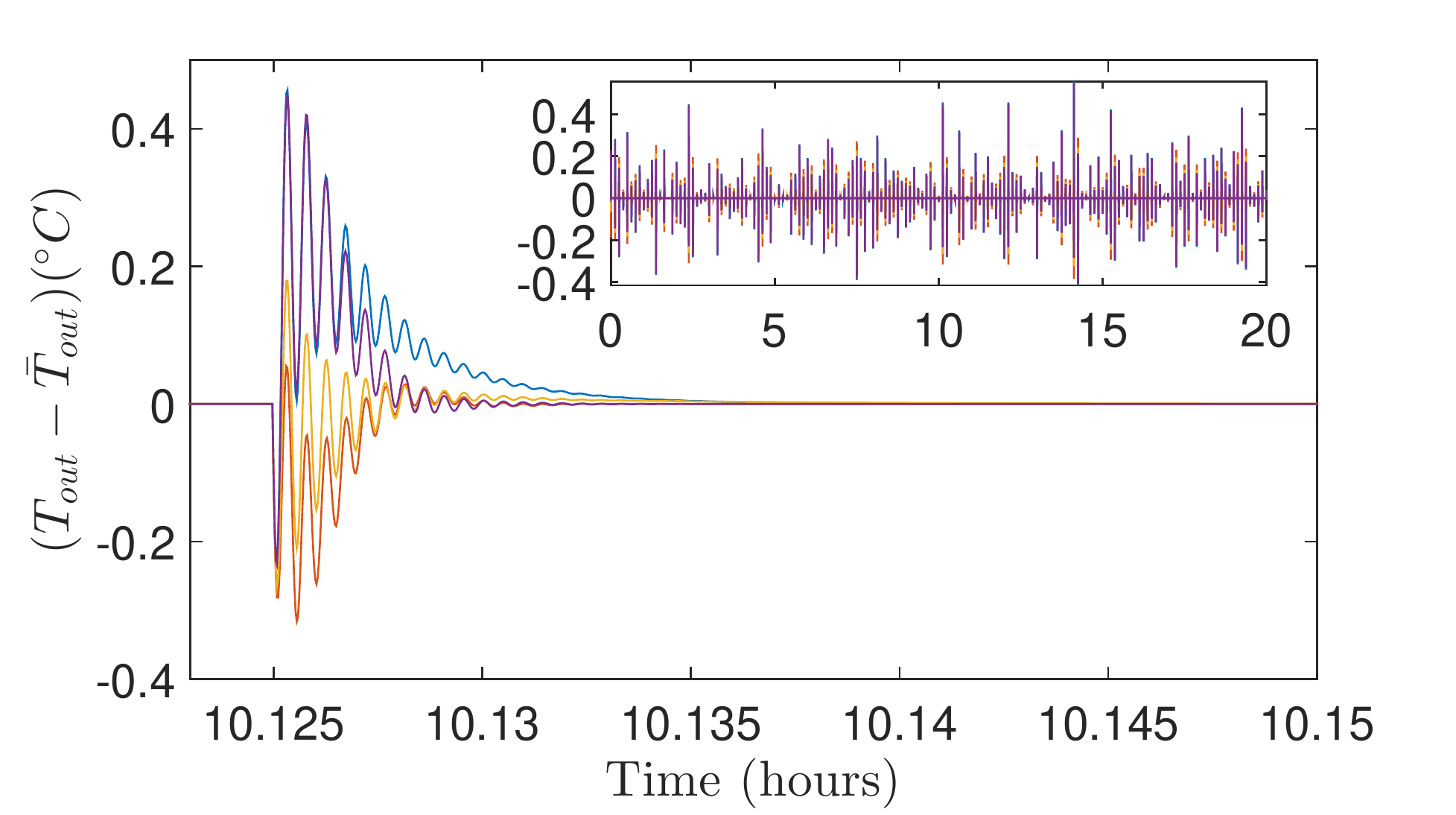}
	\caption{Plot of the response of $(\Tout - \Tout[bar])$ during the simulation for 4 selected racks. The full simulation is shown in the inset and the main plot is a magnification of the response after a change in total workload around $t = 10$ hours. Each time the total workload changes the temperature of the racks start to deviate from the optimal value, the controllers drive the data center to the new optimal solution, $(\Tout - \Tout[bar]) = \zero$ again. The oscillatory response of the output temperature coincides with the response of the supply temperature controller. Over the whole simulation the temperature is kept in a bandwidth of $\pm 0.5$~\degr{} around the target temperature distribution.}
	\label{fig:20160613-deltaTwithInset}
\end{figure}
\begin{figure}
	\centering
		\includegraphics[width=1.00\linewidth]{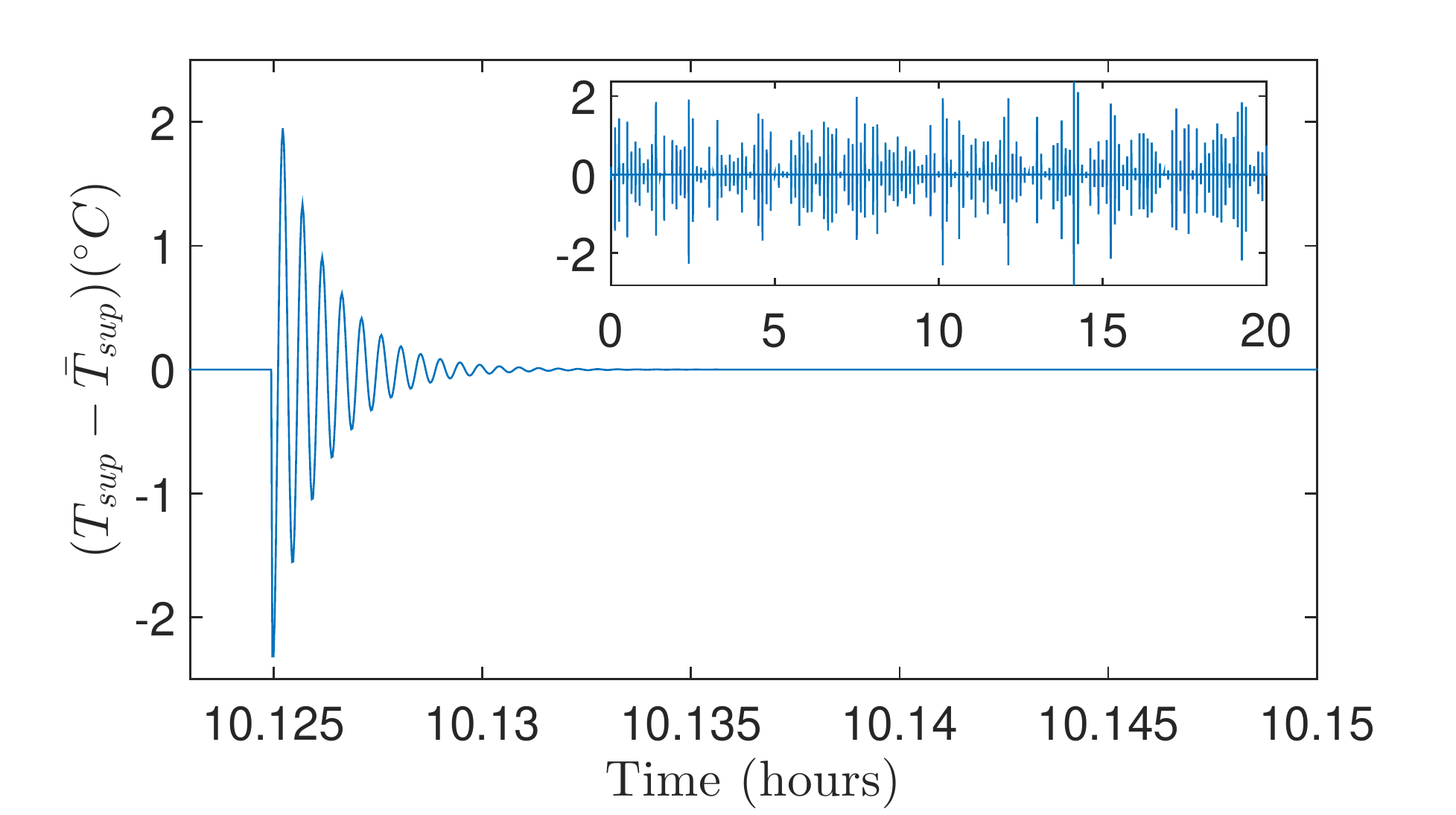}
	\caption{Plot of the response of $(\Tsupscal-\Tsup[bar][scal])$ during the simulation for 4 selected racks. The full simulation is shown in the inset and the main plot is a magnification of the response after a change in total workload around $t=10$ hours. The controller successfully drives the system to the new optimal value under varying total workload. The initial overshoot depends on the change of the total workload, i.e. the difference between the optimal supply temperatures in the two intervals. The oscillatory response results in an oscillatory fluctuation at the output temperature profile.}
	\label{fig:20160613-deltaTsupwithInset}
\end{figure}
\begin{figure}
	\centering
		\includegraphics[width=1.00\linewidth]{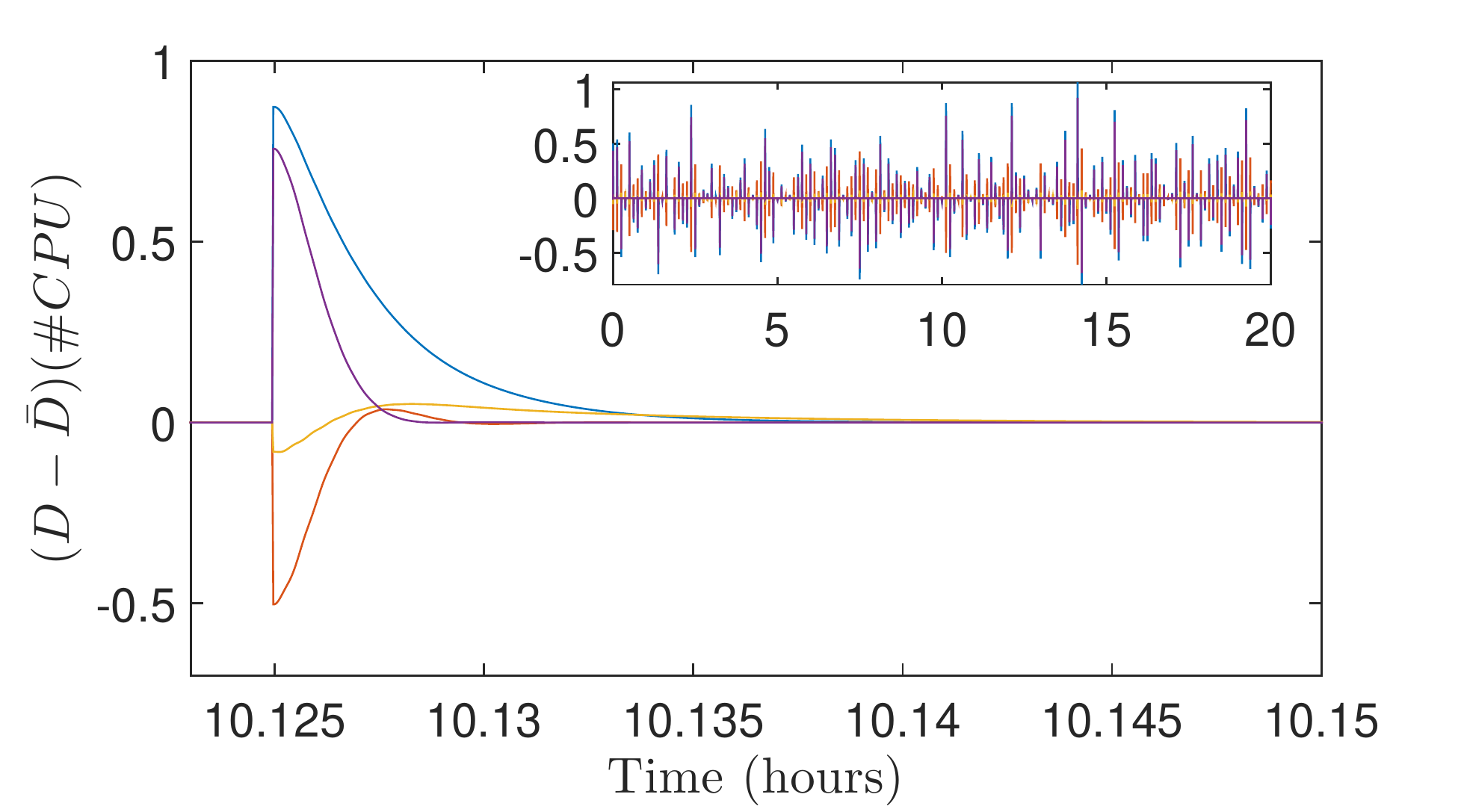}
	\caption{Plot of the response of $(D-\bar{D})$ during the simulation for 4 selected racks. The full simulation is shown in the inset and the main plot is a magnification of the response after a change in total workload around $t=10$ hours. As above the controller drives the system to the optimal value each time the total workload changes. When the total workload changes an external entity adds or subtracts work from the racks, this causes the initial overshoot. The controller redistributes the work again to the new optimal distribution.}
	\label{fig:20160613-deltaPwithInset}
\end{figure}
We supply a synthetic workload trace to the data center, see \autoref{fig:totWorkloadLargeFont}. The workload trace is constructed by varying the total workload by $\pm 10\%$ about two nominal values, $40\%$ and $60\%$ of the total data center capacity, representing nighttime and daytime operation levels respectively. The total workload is a piecewise constant function which changes value every 7.5 minutes.
Each time the total workload changes new work is added by or released to an external entity over which we assume to have no control. After this update has taken place we observe the change in temperature from the desired temperature profile. When $(\Tout-\Tout[bar])$ starts deviating from $0$ the controllers will act to respond to the changing conditions. 

In \autoref{fig:20160613-deltaTwithInset}, \autoref{fig:20160613-deltaTsupwithInset} and \autoref{fig:20160613-deltaPwithInset} the responses of $(\Tout-\Tout[bar])$, $(\Tsupscal-\Tsup[bar][scal])$ and $(D-\bar{D})$ respectively for 4 selected racks are shown. To investigate the performance of the controllers we calculated the optimal values for the variables offline and used those to plot the incremental variables. The initial overshoots in \autoref{fig:20160613-deltaTsupwithInset} and \autoref{fig:20160613-deltaPwithInset} are dependent on the change in total workload between intervals. The larger the change, the larger this initial overshoot will be. We observe different behavior for the two controllers. The controller for the supply temperature results in very oscillatory behavior for the supply temperature which in turn results in a fluctuating output temperature profile. The controller for the workload division however shows a much smoother response and more gradually steers the workload distribution to the optimal distribution. Every time the workload changes the controllers drive the system back to the optimal value in approximately $0.01$ hour = $36$ seconds. 

Although this is a very quick response it is not likely this convergence time will be attained in practice. In the simulation the cooled air of the CRAC instantly reaches the racks, whereas in a real data center it will take some time for the air to travel from the CRAC to the racks. On the contrary the workload division happens on a much shorter timescale, therefore we expect that in practice the output temperature will first increase, as new work is assigned to the rack, and after a certain delay the cooling will start to kick in to drive the temperature profile back to the setpoint. 

The supplied workload simulated a day and night cycle to study the response of the controller under large varying loads. From the results we see no difficulty for the controller to handle these different conditions. We conclude that the controller is able to keep the temperature of the racks around the target setpoint under all load conditions. 


\section{Conclusions and future work}
\label{sec:Conclusions}
Many papers on thermal-aware job scheduling have studied the topic from a practical perspective however a theoretical analysis has less often been done. In this work we describe data centers and corresponding thermodynamics in a control theoretical fashion combining optimization theory with controller design. 

We have studied the minimization of energy consumption in a data center where recirculation of airflow is present, i.e. inefficiencies in cooling of the racks, through thermal-aware job scheduling and cooling control. We have set up an optimization problem and characterized the optimal workload distribution and cooling temperature to achieve minimum energy consumption while ensuring job processing and thermal threshold satisfaction. In addition we have presented a controller that works under varying workload conditions and is able to drive the control and state variables to the optimal values.

We have shown that it is possible to uniquely determine the optimal cooling supply temperature and workload distribution as a function of the total workload and desired temperature distribution of the racks in the data center. Furthermore we have shown that the optimal temperature distribution can be analytically calculated and that this distribution is independent of the workload distribution if none of the racks reaches its computational capacity. 

With the assumption that none of the racks is at its computational capacity we have designed controllers that control the supply temperature and workload distribution to drive the data center to the optimal state. 

There are several directions in which we want to extend our research. First we want to extend the framework to include situations where the optimal temperature distribution changes due to racks reaching their computational capacity. This will allow us to include server consolidation where the number of active racks is decreased to reduce energy consumption. In these situations it is inevitable that the computational capacity of the racks is reached and that varying optimal temperature distributions will have to be addressed.

Our control approach requires knowledge of the thermal characteristics of the data center. Studying the robustness and stability of our approach under small variations of the heat recirculation matrix is therefore of importance. Lastly it would be interesting to study the inclusion of other variables in the optimization problem, such as Service Level Agreements and response times of the jobs. 


\section{Appendix}


\subsection{Proof of \autoref{lemma:proofC3}}
\label{append:proofC3}
From \autoref{lemma:D} we have that
\[
C_3 = -W^{-1}MA(I_n-\one C_1^T),
\]
where
\[
C_1^T = \frac{\one^TW^{-1}MA}{\one^TW^{-1}MA\one}.
\]
Defining a temporary variable $\alpha = W^{-1}MA$ we can write $C_3$ as
\[
C_3 = -\alpha + \frac{1}{\one^T\alpha\one}\alpha\one\one^T\alpha.
\]
The $ij$-th component of $C_3$ is then given by
\begin{equation}
C_3^{ij} = - \alpha_{ij} + \frac{\sum_{l=1}^n{\alpha_{il}}\sum_{k=1}^n{\alpha_{kj}}}{\sum_{l=1}^n{\sum_{k=1}^n{\alpha_{lk}}}}.
\label{eq:componentC3}
\end{equation}

From the definition of $\alpha$ we find that the $ij$-th component of $\alpha$ is given by
\begin{equation}
\alpha_{ij} = c_p\rho\frac{1}{w_i}(\gamma_{ji}-\delta_{ji})f_j,
\label{eq:componentAlpha}
\end{equation}
where $\delta_{ji}$ is the Kronecker delta, which is 1 if $i=j$ and 0 otherwise. To simplify the mathematics a little from now on, we assume that the data center consists of homogeneous racks, see \autoref{remark:totPower}. Combining \eqref{eq:componentAlpha} with \eqref{eq:componentC3} we have

\begin{align}
C_3^{ij} = &-c_p\rho\frac{1}{w}\bigg( (\gamma_{ji}-\delta_{ji})f_j \nonumber\\
&+ \frac{\left(f_i -\sum_{l=1}^n{\gamma_{li}f_l}\right)\left(f_j-\sum_{k=1}^n{\gamma_{jk}f_j}\right)}{\sum_{l=1}^n{(f_l-\sum_{k=1}^n{\gamma_{kl}f_k})}} \bigg).
\label{eq:componentC3withAlpha}
\end{align}
Although the big fraction in \eqref{eq:componentC3withAlpha} looks a bit daunting it is actually easy to conceptually understand it. The airflow at the inlet of the rack consists of two parts, air coming from the CRAC unit and air recirculating from other racks to the rack in question. At the outlet of the rack the airflow is composed of the air going back to the CRAC unit and the air recirculating from the rack in question to all the other racks. Looking closer at the nominator of \eqref{eq:componentC3withAlpha} we see that the first half is the air flowing from the CRAC unit to rack $i$, and the second half is the air flowing from rack $j$ to the CRAC unit. The denominator is the sum of the airflow each rack receives from the CRAC unit which is equal to the supplied airflow, $f_\ttsup$. In this way we can simplify \eqref{eq:componentC3withAlpha} to

\begin{equation}
C_3^{ij} = -c_p\rho\frac{1}{w}\bigg( (\gamma_{ji}-\delta_{ji})f_j + \frac{f_\text{(CRAC to i)}f_\text{(j to CRAC)}}{f_\ttsup} \bigg).
\label{eq:componentC3withAlphareduced}
\end{equation}

Now in the case that $i \neq j$ \eqref{eq:componentC3withAlphareduced} is reduced to 

\begin{equation}
C_3^{ij} = \underbrace{-c_p\rho\frac{1}{w}}_{<0}\bigg( \underbrace{\gamma_{ji}f_j}_{>0} + \underbrace{\frac{f_\text{(CRAC to i)}f_\text{(j to CRAC)}}{f_\ttsup}}_{>0} \bigg) < 0.
\label{eq:componentC3withAlphareducedInotJ}
\end{equation}
Here we see that the off-diagonal terms of $C_3$ are strictly negative.

As for the diagonal terms, $i = j$, we have
\begin{align}
C_3^{ii} &= c_p\rho\frac{1}{w}\bigg( (1-\gamma_{ii})f_i - \frac{f_\text{(CRAC to i)}f_\text{(i to CRAC)}}{f_\ttsup} \bigg),\nonumber\\
\Rightarrow&(1-\gamma_{ii})f_i = \underbrace{f_i - \sum_{l=1}^n{\gamma_{li}f_l}}_{f_\text{(CRAC to i)}} + \sum_{l=1,l\neq i}^n{\gamma_{li}f_l}, \nonumber\\
\Rightarrow C_3^{ii}&= \underbrace{c_p\rho\frac{1}{w}}_{>0}\bigg( \underbrace{\sum_{l=1,l\neq i}^n{\gamma_{li}f_l}}_{>0} \nonumber\\
&\quad+ \underbrace{f_\text{(CRAC to i)}}_{>0}\bigg(\underbrace{1-\frac{f_\text{(j to CRAC)}}{f_\ttsup}}_{>0} \bigg)\bigg) > 0.
\label{eq:componentC3withAlphareducedIisJ}
\end{align}
In \eqref{eq:componentC3withAlphareducedIisJ} we see that the diagonal terms of $C_3$ are strictly positive. This concludes the proof.
\hfill\QED
\subsection{Proof of Hurwitz property of matrix A}
\label{append:Ahurwitz}

Matrix $A$ as defined in \autoref{subsec:ThermModel} is given by
\begin{equation}
A = \rho c_pM^{-1}(\Gamma^T - I_n)F.
\label{eq:A-appendix}
\end{equation}
Writing the matrix out in full gives
\begin{equation}
A = \rho \pmat{
\frac{\gamma_{11}-1}{m_1}f_1 & \frac{\gamma_{21}}{m_1}f_2 & \cdots & \frac{\gamma_{n1}}{m_1}f_n \\
\vdots & \ddots &  & \vdots \\
\vdots &  & \ddots & \vdots \\
\frac{\gamma_{1n}}{m_n}f_1 & \frac{\gamma_{2n}}{m_n}f_2 & \cdots & \frac{\gamma_{nn}-1}{m_n}f_n
}.
\label{eq:AfullAppendix}
\end{equation}

If we can show that matrix $A$ is strictly diagonal dominant and that the diagonal elements are negative then by the Gerschgorin circle theorem we have shown that matrix $A$ is Hurwitz. 

First we will prove strict diagonal dominance of matrix $A$. As stated in Appendix \ref{append:proofC3}, the airflow in a rack exists of two part, the recirculation of air from the other racks and supplied air by the CRAC
\begin{align}
f_i &= \gamma_{ii}f_i + \sum_{j=1,j\neq i}^n{\gamma_{ji}f_j} + f_\ttsup^i,\nonumber\\
\Rightarrow (\gamma_{ii}-1)f_i &= -\sum_{j=1,j\neq i}^n{\gamma_{ji}f_j} - f_\ttsup^i \nonumber\\
&< -\sum_{j=1,j\neq i}^n{\gamma_{ji}f_j},\nonumber\\
\Rightarrow \abs{(\gamma_{ii}-1)f_i} &> \abs{-\sum_{j=1,j\neq i}^n{\gamma_{ji}f_j}} = \sum_{j=1,j\neq i}^n{\gamma_{ji}f_j}. \label{eq:RowDominanceA}
\end{align}
Because all $\gamma_{ij}$ are strictly between 0 and 1 both sides of the inequality are negative, the inequality sign changes when taking the absolute value of both sides. Comparing \eqref{eq:RowDominanceA} with \eqref{eq:AfullAppendix} and ignoring the mass, as the same mass appears in every element of row $i$, we see that the left hand side of \eqref{eq:RowDominanceA} is the same as the diagonal entry of $A$ and the right hand side is the same as the sum of the off-diagonal entries. As this holds for every row, then by the definition of strict diagonal dominant matrices, this shows that matrix $A$ is strictly diagonal dominant.

Furthermore as $\gamma_{ii}$ is strictly between 0 and 1, we have that all the diagonal elements of $A$ are strictly negative. Combining the above results with Gerschgorin circle theorem we have shown that all eigenvalue of matrix $A$ are strictly negative and therefore the matrix is Hurwitz.\hfill\QED


\section*{Acknowledgment}
This research was carried out as part of the perspective program Robust Design of Cyber-Physical Systems, Cooperative Networked Systems and is supported by Technology Foundation STW and industrial partners Target Holding and Better.be. The authors would also like to thank IBM Zurich Research Lab for supplying measurement data of a real-life data center.


\bibliographystyle{myIEEEtranNoURL}

\bibliography{../../Literature/ReferenceLib}

\end{document}